\definecolor{darkgreen}{rgb}{0,0.5,0}
\crefname{theorem}{Theorem}{Theorems}
\Crefname{lemma}{Lemma}{Lemmas}
\Crefname{invariant}{Invariant}{Invariants}
\Crefname{claim}{Claim}{Claims}
\Crefname{observation}{Observation}{Observations}
\Crefname{@algorithm}{Algorithm}{Algorithms}
\Crefname{figure}{Figure}{Figures}
\newtheorem{theorem}{Theorem}[section]
\newtheorem{lemma}[theorem]{Lemma}
\newtheorem*{remark*}{Remark}
\newtheorem{result}{Result}
\newcommand{\eps}{\varepsilon}
\newcommand{\eqdef}{\stackrel{\text{\tiny\rm def}}{=}}
\newcommand{\E}[1]{\mathbb{E}\left[#1\right]}
\newcommand{\prob}[1]{\Pr \left[ #1 \right]}
\newcommand{\tS}{\tilde{S}}
\newcommand{\tT}{\tilde{T}}
\newcommand{\cAMPC}{\mathcal{A}_{\textsc{MPC}}}
\newcommand{\cM}{\mathcal{M}}
\newcommand{\VSetsUpdate}{\textsc{VSets-Update}\xspace}
\newcommand{\SSSample}{\textsc{Set-Sample}\xspace}
\newcommand{\din}{d^-}
\newcommand{\dout}{d^+}
\newcommand{\eqd}{\stackrel{\mathclap{\tiny\mbox{def}}}{=}}
\newcommand{\rb}[1]{\left( #1 \right)}
\newcommand{\cb}[1]{\left\{ #1 \right\}}
\newcommand{\Estream}{E_{\rm{stream}}}
\newcommand{\Erel}{E_{\rm{rel}}}
\newcommand{\Erem}{E_{\rm{rem}}}
\newcommand{\copt}{c_{\rm OPT}}
\newcommand{\dinrem}{d_{\rm in}^{\rm rem}}
\newcommand{\doutrem}{d_{\rm out}^{\rm rem}}
\DeclareMathOperator{\poly}{poly}
\begin{document}

\newcommand\relatedversion{}
\renewcommand\relatedversion{\thanks{The full version of the paper can be accessed at \protect\url{https://arxiv.org/abs/1902.09310}}} 

\title{Faster Streaming and Scalable Algorithms for \\
Finding Directed Dense Subgraphs in Large Graphs}
\author{Slobodan Mitrovi\'{c}\thanks{UC Davis, smitrovic@ucdavis.edu}
\and Theodore Pan\thanks{UC Davis, thjpan@ucdavis.edu}}

\date{}

\maketitle

\pagenumbering{arabic}

\begin{abstract}
Finding dense subgraphs is a fundamental algorithmic tool in data mining, community detection, and clustering.
In this problem, one aims to find an induced subgraph whose edge-to-vertex ratio is maximized.

We study the directed case of this question in the context of semi-streaming and massively parallel algorithms. 
In particular, we show that it is possible to find a $(2+\epsilon)$ approximation on randomized streams even in a single pass by using $O(n \cdot {\rm poly} \log n)$ memory on $n$-vertex graphs. 
Our result improves over prior works, which were designed for arbitrary-ordered streams: the algorithm by Bahmani et al.~(VLDB 2012) which uses $O(\log n)$ passes, and the work by Esfandiari et al.~(2015) which makes one pass but uses $O(n^{3/2})$ memory. 
Moreover, our techniques extend to the Massively Parallel Computation model yielding $O(1)$ rounds in the super-linear and $O(\sqrt{\log n})$ rounds in the nearly-linear memory regime. 
This constitutes a quadratic improvement over state-of-the-art bounds by Bahmani et al.~(VLDB 2012 and WAW 2014), which require $O(\log n)$ rounds even in the super-linear memory regime.

Finally, we empirically evaluate our single-pass semi-streaming algorithm on $6$ benchmarks and show that, even on non-randomly ordered streams, the quality of its output is essentially the same as that of Bahmani et al.~(VLDB 2012) while it is $2$ times faster on large graphs.
\end{abstract}

\section{Introduction}
Given a directed graph $G = (V, E)$, the directed \emph{densest subgraph problem} asks to find two vertex subsets $S, T \subseteq V$ such that the number of edges from $S$ to $T$ scaled by $\sqrt{|S| \cdot |T|}$ is maximized. 
When $S = T$, this problem is equivalent to finding the densest subgraph in undirected graphs.
Dense subgraph discovery is a fundamental algorithmic tool in data mining~\cite{kriegel2005density,wu2019density,fang2022densest}, community detection~\cite{chen2010dense,harenberg2014community}, spam detection~\cite{leon2011web,zhang2016detecting}, fraud discovery~\cite{zhang2017hidden,ren2021ensemfdet}, clustering~\cite{kriegel2011density,bhattacharjee2021survey}, graph compression~\cite{buehrer2008scalable}, and many other applications.
Given its importance, this problem has been studied in various computational settings, including semi-streaming~\cite{bahmani2012densest,mcgregor2015densest,bhattacharya2015space}, dynamic~\cite{epasto2015efficient,sawlani2020near}, distributed~\cite{su2019distributed}, and parallel~\cite{bahmani2012densest,ghaffari2019improved}, with the first study of its undirected version dating back to the eighties~\cite{goldberg1984finding}.

A directed densest subgraph can be found in polynomial time~\cite{charikar2003greedy}. However, the corresponding algorithm solves a family of $O(|V|^2)$ linear programs, making it impractical for execution on large graphs. Nevertheless, much faster approximation algorithms have been developed. For instance, there exists a simple greedy algorithm running in near-linear time that yields a $2 + \eps$ approximation for any arbitrary constant $\eps > 0$~\cite{bahmani2012densest,charikar2003greedy}.
The prevalence of large graphs has motivated researchers to study the approximate densest subgraph problem in settings for processing big data, such as \emph{streaming} and \emph{massively parallel computation (MPC)}.

There has been a significant interest in designing semi-streaming algorithms for finding dense subgraphs, e.g., \cite{bahmani2012densest,esfandiari2015applications,bhattacharya2015space,mcgregor2015densest} and references therein.
Esfandiari et al.~\cite{esfandiari2015applications} designed a single-pass semi-streaming algorithm for finding $(1 + \eps)$-approximate densest subgraph in \emph{undirected} graphs.
For $n$-vertex graph $G$, this algorithm uniformly at random samples $O(n \cdot \log(n) / \eps^2)$ edges from $G$ in the streaming fashion. 
The authors show that the densest subgraph in such a sample is a $(1 + \eps)$-approximate densest subgraph in $G$.
This sampling idea fails in the context of directed graphs. 
To see that, consider the example in \cref{fig:difficult-example}.
\begin{figure}[h]
    \centering
    \includegraphics[width=1\linewidth]{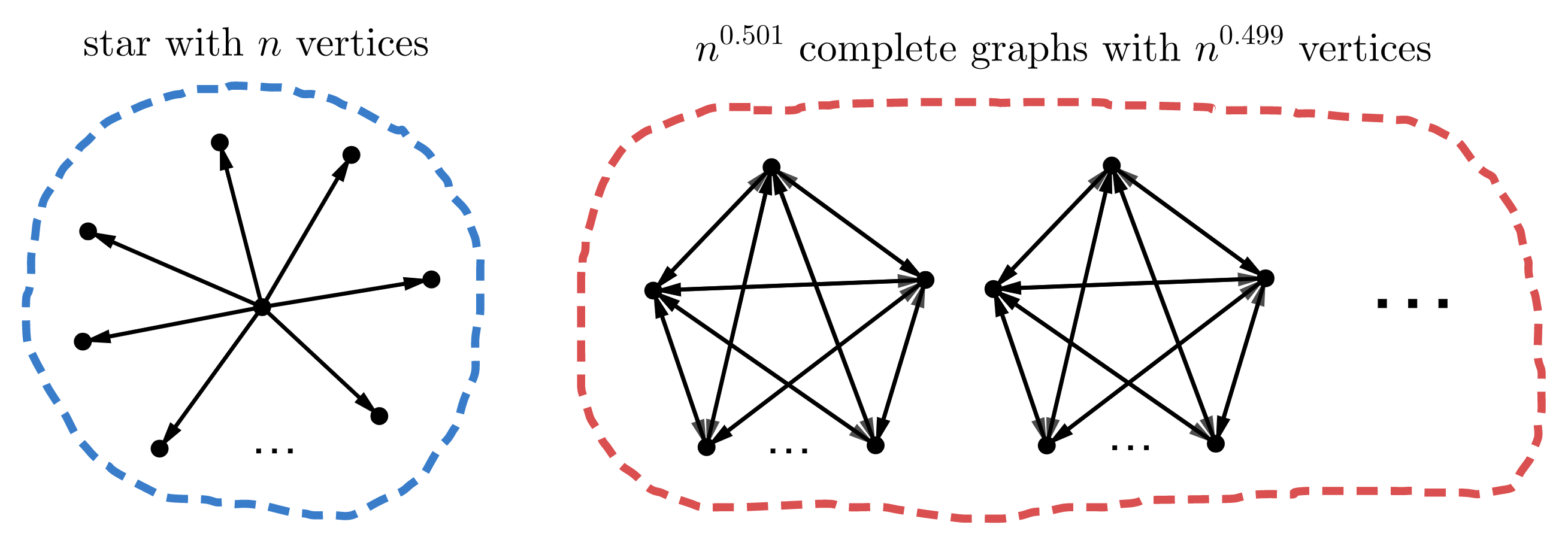}
    \caption{\small An illustration of why uniformly at random sampling applicable to undirected fails in the context of directed densest subgraph.}
    \label{fig:difficult-example}
\end{figure}
In that example, the directed densest subgraph is the star, having density $\Theta(\sqrt{n})$; each clique has density $\Theta(n^{0.499})$. However, sampling $O(n \cdot \log n)$ edges from that graph will, in expectation, contain only $O(n^{0.501})$ edges from the star, which is insufficient to recover the densest subgraph in \cref{fig:difficult-example}. 
To alleviate this, \cite{epasto2015efficient} propose a single-pass semi-streaming algorithm that requires $O(n^{1.5} \poly \log n)$ memory. 
In another line of work, Bahmani et al.~\cite{bahmani2012densest}, the memory requirement is kept at $O(n \poly \log n)$ at the expense of making $O(\log n)$ passes.
It is unknown whether these techniques can be extended to achieve the best of both worlds, i.e., a single pass and $O(n \poly \log n)$ memory.

Finding dense subgraphs in MPC has also received significant attention. 
For undirected graphs, the sampling strategy devised in \cite{esfandiari2015applications} for the streaming setting readily transfers to an $O(1)$ MPC round algorithm in the near-linear memory regime.
For the sublinear memory regime, Bahmani et al.~\cite{bahmani2014efficient} developed an algorithm that in $O(\log(n) / \eps^2)$ rounds outputs a $(1+\eps)$ approximation.
This was improved to $O(\sqrt{\log n} \cdot \log \log n)$ rounds by Ghaffari et al.~\cite{ghaffari2019improved}.
However, state-of-the-art algorithms for densest subgraphs have significantly higher round complexity for directed than for undirected graphs.
The works \cite{bahmani2012densest} and \cite{bahmani2014efficient} also design an algorithm for finding $(2+\eps$)- and $(1+\eps)$-approximate densest subgraphs in directed graphs in $O(\log(n) / \eps)$ and $O(\log(n) / \eps^2)$ MPC rounds, respectively. 
To the best of our knowledge, this is the most efficient MPC approach, even for the super-linear memory regime, leaving a considerable gap between the case of directed and undirected graphs: $O(\log n)$ vs.~$O(1)$ round complexity.
Given that many modern graphs are directed -- examples include Twitter and Instagram '' following '' relationships, networks of sent emails, and even links on the Web -- \emph{it is natural to wonder whether faster algorithms for finding directed densest subgraphs can be developed}.

\subsection{Our contributions}
As our first result, we make progress in the semi-streaming setting.
\begin{result}[\cref{theorem:semi-streaming} rephrased]
\label{result:semi-streaming}
    Given an $n$-vertex graph, there exists a single-pass semi-streaming algorithm that, with high probability, over randomized streams outputs a $(2+\eps)$-approximate directed densest subgraph while using $O(n \poly(\log(n) / \eps))$ memory.
\end{result}
We build on ideas we develop for \cref{result:semi-streaming} and improve the state-of-the-art in the context of MPC as well.
\begin{result}[\cref{theorem:MPC-superlinear,theorem:MPC-nearlylinear} summarized]
    Given an $n$-vertex graph, there exists an MPC algorithm that, with high probability, outputs a $(2+\eps)$-approximate directed densest and
    \begin{itemize}
        \item when the memory per machine is $n^{1+\delta}$ for an arbitrary constant $\delta > 0$, the algorithm runs in $O(1)$ rounds;
        \item when the memory per machine is $n \poly(\log(n) / \eps)$, the algorithm runs in $O(\sqrt{\log n})$ rounds.
    \end{itemize}
\end{result}
The fastest previously known MPC algorithms require $O(\log n)$ rounds even when the memory per machine is $n^{1+\delta}$~\cite{bahmani2012densest,bahmani2014efficient}. So, our approach makes significant progress toward closing the gap between the undirected and directed cases.

Moreover, our empirical evaluations show that our semi-streaming algorithm is at least $2$ times faster than Bahmani et al.~\cite{bahmani2012densest} while achieving essentially the same accuracy. It is interesting to note that our algorithm is by several percent more accurate than \cite{bahmani2012densest} on large graphs.
These hold even when the underlying stream is non-randomized.

\section{Preliminaries}
\paragraph{Notation.}
Given a directed graph $G = (V, E)$, we use $E_G$ to denote $E$, the set of edges specific to graph $G$. We use $n$ to refer to $|V|$.
Given two vertex sets $S, T\subseteq V$, we refer to the edges between them by $E_G(S, T) \eqd \{e = (i, j) \in E_G : i\in S, j\in T\}$.
For the sake of brevity, for $i, j \in V$, we also write $E_G(i, T) \eqd E_G(\{i\}, T)$ and $E_G(S, j) \eqd E_G(S, \{j\})$.
Specific to streams, we use $\Estream(S,T)$ to denote the subset of $E_G(S,T)$ remaining in the stream.
For vertex $v \in V$, we use $\din_G(v)$ to denote the in-degree and $\dout_G(v)$ to denote the out-degree of $v$ in $G$. 

When we say that an event $A$ happens \emph{with high probability}, or whp for short, we imply that $\prob{A} \ge 1 - n^{-c}$, for a constant $c>0$. In our algorithms, the constant $c$ can be made arbitrarily large by paying constant factors in the pass/round or memory complexity.

\paragraph{Directed densest subgraph.} Given directed graph $G = (V, E)$ and vertex sets $S,T\subseteq V$, the \textit{density} $\rho(S, T)$ is defined as $\rho(S, T) \eqd |E_G(S,T)|/\sqrt{|S| \cdot |T|}$. A \textit{densest subgraph} is sets $S^*$, $T^*$ such that $(S^*, T^*) \in \arg \max_{S,T\subseteq V} \rho(S, T)$.

\paragraph{Semi-streaming.}
In the semi-streaming model, an algorithm scans input data, e.g., scans edge by edge of an input graph. After all the edges are scanned, we say that the algorithm made a \emph{pass}.
When the edges of the graph are presented to the algorithm in a random permutation, we say that the stream is \emph{randomized}.
Throughout the process, for $n$-vertex graph, the algorithm is typically allowed to use $O(n \poly \log n)$ memory. 
The complexity measure in this setup is the number of passes the algorithm makes and the memory it requires; as discussed for \cite{esfandiari2015applications}, some approaches require polynomially more than memory $n$. 
After all the passes are performed, the algorithm outputs a solution.

\paragraph{Massively Parallel Computation (MPC).}
The Massively Parallel Computation (MPC) model has become a standard in the theoretical study of practical large-scale frameworks~\cite{karloff2010model,goodrich2011sorting,beame2017communication} such as MapReduce, Hadoop, Spark, and Flume.
In MPC, the computation is performed in synchronous rounds across $N$ machines. Each machine has $S$ words of memory.
Originally, input data is partitioned arbitrarily across the machines. During a round, each machine performs computation on its data locally. At the end of a round, machines exchange messages. The communication topology among the machines is a clique. During a round, a machine can, in total, send (receive) at most $S$ bits of data.
There are three important memory regimes with respect to $S$ relative to data size. Let the input be an $n$-vertex graph, and let $\delta \in (0, 1)$ be an arbitrary constant: \emph{sub-linear} corresponds to $S = n^{\delta}$; \emph{near-linear} corresponds to $S = n \poly \log n$; and, $S = n^{1 + \delta}$ corresponds to \emph{super-linear}.

\paragraph{Probability tools.} In our analysis, we extensively apply the following well-known tool from probability.
\begin{theorem}[Chernoff bound]\label{lemma:chernoff}
	Let $X_1, \ldots, X_k$ be independent random variables taking values in $[0, 1]$. Let $X \eqdef \sum_{i = 1}^k X_i$ and $\mu \eqdef \E{X}$. Then,
	\begin{enumerate}[(A)]
		\item\label{item:delta-at-most-1} For any $\delta \in [0, 1]$ it holds $\prob{|X - \mu| \ge \delta \mu} \le 2 \exp\rb{- \delta^2 \mu / 3}$.
		\item\label{item:delta-at-most-1-ge} For any $\delta \in [0, 1]$ it holds $\prob{X \ge (1 + \delta) \mu} \le \exp\rb{- \delta^2 \mu / 3}$.
	\end{enumerate}
\end{theorem}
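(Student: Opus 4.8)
The plan is to use the textbook exponential-moment (Chernoff/Bernstein) argument: control a moment generating function, apply Markov's inequality, and then optimize the auxiliary parameter. I would establish the one-sided bound (B) first, and then derive the two-sided bound (A) by combining (B) with a matching lower-tail estimate through a union bound. Throughout, the only structural facts I use about the $X_i$ are that they are independent and take values in $[0,1]$.

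\emph{Upper tail.} Fix $t > 0$. Since $y \mapsto e^{t y}$ is increasing, Markov's inequality gives $\prob{X \ge (1+\delta)\mu} \le e^{-t(1+\delta)\mu}\,\E{e^{tX}}$. By independence, $\E{e^{tX}} = \prod_{i=1}^{k}\E{e^{tX_i}}$, and since $X_i \in [0,1]$ while $y \mapsto e^{ty}$ is convex, the secant-line bound gives $e^{tX_i} \le 1 + (e^{t}-1)X_i$ pointwise; taking expectations and using $1 + z \le e^{z}$ yields $\E{e^{tX_i}} \le \exp\rb{(e^{t}-1)\E{X_i}}$. Multiplying over $i$ and using $\sum_i \E{X_i} = \mu$ gives $\E{e^{tX}} \le \exp\rb{(e^{t}-1)\mu}$, hence
\[
\prob{X \ge (1+\delta)\mu} \le \exp\rb{(e^{t} - 1 - t(1+\delta))\mu}.
\]
The exponent is minimized over $t > 0$ at $t = \ln(1+\delta)$, producing the closed form $\prob{X \ge (1+\delta)\mu} \le \exp(-\mu\, g(\delta))$ with $g(\delta) \eqd (1+\delta)\ln(1+\delta) - \delta$.

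\emph{From the closed form to the bound in (B).} It remains to check the scalar inequality $g(\delta) \ge \delta^2/3$ for $\delta \in [0,1]$; this is the only genuinely analytic step. I would verify it by elementary calculus — for instance via the second-order Taylor estimate $g(\delta) \ge \tfrac{\delta^2}{2} - \tfrac{\delta^3}{6} \ge \tfrac{\delta^2}{3}$ on $[0,1]$ (the last step since $\tfrac12 - \tfrac{\delta}{6} \ge \tfrac13$ when $\delta \le 1$), or by noting that $\phi(\delta) \eqd g(\delta) - \delta^2/3$ has $\phi(0) = \phi'(0) = 0$ and $\phi'(\delta) = \ln(1+\delta) - \tfrac{2}{3}\delta \ge 0$ on $[0,1]$. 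Either route gives part (B).

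\emph{Lower tail and part (A).} Rerunning the argument with $e^{-tX}$ for $t > 0$ gives $\prob{X \le (1-\delta)\mu} \le \exp\rb{(e^{-t} - 1 + t(1-\delta))\mu}$, minimized at $t = -\ln(1-\delta)$ to yield $\prob{X \le (1-\delta)\mu} \le \exp(-\mu\, h(\delta))$ with $h(\delta) \eqd \delta + (1-\delta)\ln(1-\delta)$; here $h(0) = h'(0) = 0$ and $h''(\delta) = \tfrac{1}{1-\delta} \ge 1$, so $h(\delta) \ge \delta^2/2 \ge \delta^2/3$ on $[0,1]$. Combining, $\prob{|X-\mu| \ge \delta\mu} \le \prob{X - \mu \ge \delta\mu} + \prob{\mu - X \ge \delta\mu} \le \exp(-\delta^2\mu/3) + \exp(-\delta^2\mu/2) \le 2\exp(-\delta^2\mu/3)$, which is part (A). The main (and essentially only) obstacle is the pair of scalar estimates $g(\delta) \ge \delta^2/3$ and $h(\delta) \ge \delta^2/2$ on $[0,1]$, since they pin down the constant in the exponent; the Markov step, the secant-line bound for $[0,1]$-valued variables, the factorization by independence, and the logarithmic optimizations are all mechanical.
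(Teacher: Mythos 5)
The paper states this Chernoff bound as a well-known ``tool from probability'' without supplying a proof, so there is no paper-internal argument to compare against. Your proof is the standard moment-generating-function derivation and it is correct: Markov applied to $e^{tX}$, the secant-line (Hoeffding) bound $e^{tX_i}\le 1+(e^t-1)X_i$ to handle general $[0,1]$-valued $X_i$ rather than Bernoulli, optimization at $t=\ln(1+\delta)$ (resp.\ $t=-\ln(1-\delta)$), the scalar estimates $(1+\delta)\ln(1+\delta)-\delta\ge\delta^2/3$ and $\delta+(1-\delta)\ln(1-\delta)\ge\delta^2/2$ on $[0,1]$, and a union bound for the two-sided statement. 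I verified both scalar inequalities (for the first, $\phi'(\delta)=\ln(1+\delta)-\tfrac{2}{3}\delta$ vanishes at $0$, rises until $\delta=\tfrac12$, then falls to $\ln 2-\tfrac23>0$ at $\delta=1$, so $\phi'\ge 0$ throughout; for the second, $h''(\delta)=\tfrac{1}{1-\delta}\ge 1$ gives $h(\delta)\ge\delta^2/2$ directly), and the $\delta=0$ boundary case is trivial. No gaps.
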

\section{The Base Algorithm}
Let $c$ be a value such that there exists a densest subgraph $(S, T)$ in $G$ and $c = |S| / |T|$.
The starting point of our approach is a classical peeling algorithm for the directed densest subgraph problem.
This method receives $c$ as input and iteratively ``peels'', i.e., removes, vertices that are below a certain degree threshold.
Intuitively, the peeling is performed so that the removed vertices do not significantly affect the densest subgraph.
We make this statement formal by \cref{theorem:approx}.
One iteration of this peeling primitive is given by \VSetsUpdate (\cref{alg:peeling-iteration}).


\begin{algorithm}
\caption{(\VSetsUpdate)}\label{alg:peeling-iteration} 

\textbf{Input:} $G = (V, E)$, $c > 0$, $\epsilon \in (0,1)$, and vertex sets $S$, $T$
\begin{algorithmic}[1]
\If{$|S|/|T| \geq c$}

    \State $A(S) \gets \{i\in S : |E(i, T)| \leq (1+\epsilon)\frac{|E(S,T)|}{|S|}\}$ \label{line:peel-S}

    \State $S \gets S\backslash A(S)$
\Else

    \State $B(T) \gets \{j\in T : |E(S, j)| \leq (1+\epsilon)\frac{|E(S,T)|}{|T|}\}$ \label{line:peel-T}

    \State $T \gets T\backslash B(T)$
\EndIf


\State\Return{$(S, T)$}

\end{algorithmic}
\end{algorithm}

By iteratively peeling and keeping track of the vertex sets that produce the densest subgraph, we are able to find an approximation of the directed densest subgraph. \VSetsUpdate has two handy features. The first one is that, given two sets $S$ and $T$, it reduces the size of one of them by $1 + \eps$.
\begin{lemma}
\label{passes}
Let $G = (V, E)$ be a directed graph. For $i \ge 0$, let $(S_i, T_i)$ be a sequence of set-pairs such that $(S_0, V_0) = (V, V)$ and $(S_{i+1}, T_{i+1}) = \VSetsUpdate(G, c, \epsilon, S_i, T_i)$. Then $S_t = \emptyset$ or $T_t = \emptyset$ for some $t\in O(\log(n) / \epsilon)$.
\end{lemma}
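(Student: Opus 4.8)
The plan is to argue that each application of \VSetsUpdate shrinks the active side by a $(1+\eps)$ factor, so after $O(\log n / \eps)$ iterations one of the sides is exhausted. First I would observe that the algorithm always operates on the larger side relative to the threshold $c$: if $|S_i|/|T_i| \ge c$ it peels from $S_i$, otherwise from $T_i$. So it suffices to show that whenever we peel from $S$ (the other case is symmetric), we have $|S_{i+1}| \le \frac{1}{1+\eps'}|S_i|$ for some constant $\eps'$ comparable to $\eps$; then, since $|S_0| = |T_0| = n$, after $O(\log_{1+\eps'} n) = O(\log(n)/\eps)$ peels on the $S$-side we get $S_t = \emptyset$, and likewise for the $T$-side.

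The key step is the counting argument showing the peeled set $A(S)$ is a constant fraction of $S$. By definition, $A(S) = \{i \in S : |E(i,T)| \le (1+\eps)\frac{|E(S,T)|}{|S|}\}$. Suppose toward a contradiction that $|A(S)| < \frac{\eps}{1+\eps}|S|$, i.e., the complement $S \setminus A(S)$ has size $|S \setminus A(S)| > \frac{1}{1+\eps}|S|$. Every vertex $i \in S \setminus A(S)$ has $|E(i,T)| > (1+\eps)\frac{|E(S,T)|}{|S|}$, so summing over these vertices, $|E(S,T)| \ge \sum_{i \in S\setminus A(S)} |E(i,T)| > |S\setminus A(S)| \cdot (1+\eps)\frac{|E(S,T)|}{|S|} > \frac{1}{1+\eps}|S| \cdot (1+\eps)\frac{|E(S,T)|}{|S|} = |E(S,T)|$, a contradiction (assuming $|E(S,T)| > 0$; if $|E(S,T)| = 0$ then every $i \in S$ satisfies the peeling condition, so $A(S) = S$ and $S_{i+1} = \emptyset$ immediately). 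Hence $|A(S)| \ge \frac{\eps}{1+\eps}|S|$, which gives $|S_{i+1}| = |S_i| - |A(S_i)| \le \left(1 - \frac{\eps}{1+\eps}\right)|S_i| = \frac{1}{1+\eps}|S_i|$.

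It remains to bound the total number of iterations. Since each iteration peels from either $S$ or $T$, and the $S$-side can sustain at most $\log_{1+\eps}(n) + 1$ peels before becoming empty (starting from $|S_0| = n$ and noting that once the size drops below $1$ it is $0$), and similarly for the $T$-side, the total is at most $2\log_{1+\eps}(n) + 2 = O(\log(n)/\eps)$, using $\log(1+\eps) = \Theta(\eps)$ for $\eps \in (0,1)$. Thus $S_t = \emptyset$ or $T_t = \emptyset$ for some $t \in O(\log(n)/\eps)$.

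The main obstacle I anticipate is a minor one: handling the boundary behavior cleanly — specifically the case $|E(S,T)| = 0$ (where the bound $|A(S)| \ge \frac{\eps}{1+\eps}|S|$ should be replaced by $A(S) = S$), and the fact that a set of non-integer "size bound" below $1$ forces the set to be empty rather than continuing indefinitely. Neither affects the asymptotics, but both should be stated to make the induction airtight. Everything else is the elementary averaging argument above.
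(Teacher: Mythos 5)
Your proof is correct and follows essentially the same approach as the paper: the key step is the averaging argument that, because every surviving vertex in $S_{i+1} = S_i \setminus A(S_i)$ has out-degree exceeding $(1+\eps)|E(S_i,T_i)|/|S_i|$, summing these degrees against the total $|E(S_i,T_i)|$ forces $|S_{i+1}| < |S_i|/(1+\eps)$ (you phrase it as a contradiction on $|A(S)|$; the paper states it directly, but it is the same count). You additionally handle the $|E(S,T)|=0$ corner case and the ``size below $1$'' rounding explicitly, which the paper leaves implicit; neither changes the asymptotics.
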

\begin{proof}
Consider \VSetsUpdate invoked with $|S_i|/|T_i| \geq c$, so vertices are removed from $S_i$.
Then, all vertices $u\in S_i\backslash A(S_i)$ satisfy $|E(u, T)| > (1+\epsilon)\frac{|E(S_i,T_i)|}{|S_i|}$.
Since $S_{i+1} = S_i\backslash A(S_i)$, we have that $|S_{i+1}| < |S_i|/(1+\epsilon)$.
Similarly, if $|S_i|/|T_i| < c$, we have that $|T_{i+1}| < |T_i| / (1+\epsilon)$.
So, the size of one of the sets decreases by at least $1+\epsilon$ times each time \VSetsUpdate is invoked. Then, one of the sets becomes empty after $O(\log_{1+\epsilon}n) = O(\log(n) / \eps)$ invocations of \VSetsUpdate.
\end{proof}
The second useful feature of \VSetsUpdate is that one of the pairs $(S_i, T_i)$, as defined in the statement of \cref{passes}, is a $(2+\epsilon)$-approximate densest subgraph. We prove such a statement in \cref{theorem:approx}.
\section{$(2+\epsilon)$ Approximation in a Single Pass}
\label{sec:single-pass}
\subsection{A Multi-pass Sampling-based Algorithm}
We modify the base algorithm to sample edges to determine which vertices to peel, instead of using the entire graph for that. By sampling a sufficient number of edges, the degrees of vertices in the sample are used to estimate the degrees of vertices in the original graph.
For each pass, \cref{alg:first} creates a new sample of our graph to apply \VSetsUpdate to.
In our analysis, we extensively use the following threshold
\begin{equation}\label{eq:threshold}
    \xi \eqdef \frac{60\log n}{\epsilon^2}.
\end{equation}
\begin{algorithm}
\caption{A semi-streaming directed densest subgraph algorithm}\label{alg:first}
\textbf{Input:} $G=(V,E)$, $c > 0$, and $\epsilon \in (0,1)$\\
\textbf{Output:} A $(2+\eps)$-approximate directed densest subgraph
\begin{algorithmic}[1]
\State Initialize each of $S, T,S^*,T^*$ to $V$

\While{$S \neq \emptyset$ and $T \neq \emptyset$}

    \State Let $H$ be a sample of $E_G(S, T)$ with each edge sampled independently with probability $\min\cb{\frac{n \xi}{(1-\epsilon)|E_G(S, T)|},1}$ \label{line:multi-pass-define-p}

    \State $(S,T) \gets \VSetsUpdate(H, c, \epsilon, S, T)$

    \If{$\rho(S,T) > \rho(S^*,T^*)$}
        \State $S^*\gets S, T^*\leftarrow T$
    \EndIf

\EndWhile

\State \Return{$S^*,T^*$}
\end{algorithmic}
\end{algorithm}

Before we begin analyzing \cref{alg:first}, we prove \cref{gtoh} as a tool that connects the degrees of vertices of graph $G$ and sampled subgraph $H$. 
\begin{lemma}
\label{gtoh}
Let $G = (V, E)$ be a graph, $\epsilon \in (0,1)$, and $p\in (0,1]$. Let $H$ be a subgraph of $G$ that contains each edge of $G$ independently with probability $p$.
Then, for all $v \in G$, the following hold with probability $1 - \frac{1}{n^3}$: (i) if $\din_G(v) \geq \xi / p$ then $p\din_G(v) \leq (1+\epsilon)\din_H(v)$, (ii) if $\din_G(v) < \xi / p$ then $\din_H(v) < 2 \xi$ (same claims hold for out-degrees $\dout_G(v)$ and $\dout_H(v)$).
\end{lemma}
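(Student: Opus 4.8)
The plan is to prove the two claims for in-degrees (the out-degree case being symmetric) and then apply a union bound over all $2n$ events. Fix a vertex $v$ and write $d \eqdef \din_G(v)$. The random variable $\din_H(v)$ is a sum of $d$ independent indicator variables $X_1,\dots,X_d$, one per in-edge of $v$ in $G$, each equal to $1$ with probability $p$; so $\din_H(v) = \sum_i X_i$ has mean $\mu = pd$. This is exactly the setup of the Chernoff bound in \cref{lemma:chernoff}, and the whole proof is a matter of feeding the right $\delta$ into parts \eqref{item:delta-at-most-1} and \eqref{item:delta-at-most-1-ge} and checking that $\mu$ is large enough to make the failure probability at most $n^{-3}/(2n)$, i.e.\ roughly $n^{-4}$.

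For claim (i), suppose $d \geq \xi/p$, so $\mu = pd \geq \xi = 60\log n/\eps^2$. I would apply \cref{lemma:chernoff}\eqref{item:delta-at-most-1} with $\delta = \eps$ (legitimate since $\eps \in (0,1)$): the probability that $|\din_H(v) - \mu| \geq \eps\mu$ is at most $2\exp(-\eps^2\mu/3) \leq 2\exp(-\eps^2\xi/3) = 2\exp(-20\log n) = 2n^{-20}$. On the complementary event, $\din_H(v) \geq (1-\eps)\mu = (1-\eps)p\,d$, which rearranges to $p\,\din_G(v) \leq \din_H(v)/(1-\eps) \leq (1+\eps)\din_H(v)$ — the last step uses $1/(1-\eps) \le 1+\eps$... wait, that inequality goes the wrong way, so instead I would note $(1-\eps)(1+\eps) = 1-\eps^2 \le 1$, hence $p\,\din_G(v) = \mu \le \din_H(v)/(1-\eps)$ and we actually want $p\din_G(v)\le (1+\eps)\din_H(v)$, which follows because $\din_H(v)\ge(1-\eps)\mu$ gives $\mu\le \din_H(v)/(1-\eps)$; then since $1/(1-\eps)\le 1+\eps+2\eps^2\le\cdots$ — this is the one spot requiring a careful constant, and the clean route is to observe that the lemma's threshold constant $60$ leaves ample slack, so one can instead run part \eqref{item:delta-at-most-1} with a slightly smaller deviation or simply verify $(1-\eps)(1+\eps)\ge$ the needed bound directly. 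I expect this constant-chasing to be the only real friction point.

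For claim (ii), suppose $d < \xi/p$, so $\mu = pd < \xi$. Here I want to show $\din_H(v) < 2\xi$ with high probability. If $\mu \le \xi$ exactly, then by \cref{lemma:chernoff}\eqref{item:delta-at-most-1-ge} with $\delta = 1$ we get $\prob{\din_H(v) \ge 2\mu} \le \exp(-\mu/3)$; but $\mu$ could be small, so instead I would compare against the threshold directly: $\prob{\din_H(v) \ge 2\xi} = \prob{\din_H(v) \ge (1+\delta)\mu}$ where $1+\delta = 2\xi/\mu \ge 2$, so $\delta \ge 1$. Since the Chernoff upper-tail bound $\exp(-\delta^2\mu/3)$ is only stated for $\delta \in [0,1]$, I would either invoke the standard fact that $\prob{X \ge (1+\delta)\mu} \le \exp(-\delta\mu/3)$ for $\delta \ge 1$ (a routine variant), or — to stay strictly within the stated tools — add dummy independent Bernoulli($p$) variables to pad $\mu$ up to exactly $\xi$ (this only increases $\din_H(v)$ stochastically), then apply part \eqref{item:delta-at-most-1-ge} with $\delta = 1$ to the padded sum to get $\prob{\din_H(v) \ge 2\xi} \le \exp(-\xi/3) = n^{-20}$.

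Finally, a union bound over the two claims for all $n$ vertices, doubled for out-degrees, gives a total failure probability at most $4n \cdot 2n^{-20} \le n^{-3}$ for $n$ sufficiently large, which is the claimed bound. The main obstacle, as noted, is not conceptual but bookkeeping: handling the upper tail in part (ii) when $\mu$ may be far below $\xi$ (solved by the padding/monotonicity trick or the $\delta \ge 1$ Chernoff variant), and pinning down the $(1-\eps)$-versus-$(1+\eps)$ constant in part (i). The generous constant $60$ in the definition of $\xi$ is precisely what gives the slack to make all of these go through cleanly.
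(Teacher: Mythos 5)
Your approach is the same as the paper's: fix a vertex, observe that $\din_H(v)$ is a sum of independent Bernoulli($p$) indicators with mean $\mu = p\din_G(v)$, split into the two cases $\mu \geq \xi$ and $\mu < \xi$, apply Chernoff, and union-bound over vertices. That part is right. But the friction point you flag in part~(i) and then leave unresolved is a real gap, and the way you set it up makes it harder than it needs to be. You apply \cref{lemma:chernoff}\eqref{item:delta-at-most-1} with $\delta=\eps$ to get $\din_H(v)\ge(1-\eps)\mu$ and then try to convert $1/(1-\eps)$ into $1+\eps$, which, as you correctly note, is false. The clean fix — and what the paper does — is to never pass through $(1-\eps)^{-1}$ at all. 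The event you need to bound is $\{\mu \ge (1+\eps)\din_H(v)\}$, i.e.\ $\{\din_H(v) \le \mu/(1+\eps)\}$, which is contained in $\{\mu - \din_H(v) \ge \tfrac{\eps}{1+\eps}\mu\} \subseteq \{\mu - \din_H(v) \ge \tfrac{\eps}{2}\mu\}$ since $\eps<1$. Now apply \cref{lemma:chernoff}\eqref{item:delta-at-most-1} with $\delta = \eps/2$: the failure probability is at most $2\exp(-\eps^2\mu/12) \le 2\exp(-\eps^2\xi/12) = 2n^{-5}$, and the factor-of-$4$ loss in the exponent is exactly what the generous constant $60$ in $\xi$ is there to absorb. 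You gestured at "a slightly smaller deviation" but did not supply it; supplying $\delta=\eps/2$ with the set inclusion above closes the gap.

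Your concern in part~(ii) about $\delta>1$ falling outside the stated range of \cref{lemma:chernoff}\eqref{item:delta-at-most-1-ge} is legitimate, and in fact the paper's own proof is also a little loose on this point (it writes the $\exp(-\delta^2\mu/3)$ form for a $\delta$ that can exceed $1$). Either of your proposed repairs — the standard $\Pr[X\ge(1+\delta)\mu]\le\exp(-\delta\mu/3)$ variant for $\delta\ge1$, or padding $\mu$ up to $\xi$ by monotonicity and invoking the bound with $\delta=1$ — is sound and gives $\Pr[\din_H(v)\ge 2\xi]\le\exp(-\xi/3)=n^{-20/\eps^2}\le n^{-20}$, matching what is needed. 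The union bound over $2n$ in/out-degree events then gives failure probability at most $O(n^{-4})\le n^{-3}$, as claimed.
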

\begin{proof}
Consider a vertex $v\in V$. Let $X$ be the random variable equal to $\din_H(v)$. Observe that $\E{X} = p \din_G(v)$. Then, if $\din_G(v) \geq \xi/p$, we have
\begin{eqnarray}
 & & \Pr(\E{X} \geq (1+\epsilon)X)\nonumber\\
&\leq& \Pr\left(\E{X} - X \geq \frac{\epsilon}{2} \E{X}\right)\nonumber\\
&\leq& 3\exp\left(-\epsilon^2 \E{X} / 12\right)\label{eq:gtoh2}\\
&=& 3\exp\left(-\epsilon^2 p\din_G(v) / 12\right)\nonumber\\
&\leq& 3\exp(-5\log (n))\nonumber\\
&=& \frac{3}{n^5} \leq \frac{1}{n^4}\nonumber
\end{eqnarray}
assuming $n \geq 3$, where we used the Chernoff bound, \cref{lemma:chernoff}~\eqref{item:delta-at-most-1}, to derive \eqref{eq:gtoh2}.

Now, if $\din_G(v) < \xi/p$, we have
\begin{eqnarray}
& & \Pr\left(X \geq 2\xi\right)\nonumber\\
&\leq& \Pr\left(X - \E{X} \geq \xi \right)\nonumber\\
&\leq& \frac{3}{2}\exp(-\xi^2/(3 \E{X}))\label{eq:gtoh3}\\
&\leq& \frac{3}{2}\exp(-\xi/3)\nonumber\\
&\leq& \frac{1}{n^4}\nonumber
\end{eqnarray}
where \eqref{eq:gtoh3} follows by the Chernoff bound.

Therefore, using the union bound, the claim holds with probability $1 - \frac{1}{n^3}$. The same proof applies to out-degrees.
\end{proof}

Now, using \cref{gtoh}, we begin analyzing \cref{alg:first}.

\begin{lemma}
\label{firstmem}
\cref{alg:first} uses $O\left(\frac{n\log^2(n)}{\epsilon^3}\right)$ memory with probability $1 - \frac{\log_{1+\epsilon}n}{n^4}$.
\end{lemma}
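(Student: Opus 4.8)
The plan is to bound the memory used in a single iteration of the while-loop of \cref{alg:first}, and then take a union bound over all iterations. In one iteration we store the sample $H$ of $E_G(S,T)$, where each edge is included independently with probability $p \eqd \min\{\frac{n\xi}{(1-\epsilon)|E_G(S,T)|}, 1\}$; the sets $S, T, S^*, T^*$ take only $O(n)$ words. So the memory is dominated by $|E_H|$. First I would observe that $\E{|E_H|} = p \cdot |E_G(S,T)| \le \frac{n\xi}{1-\epsilon} = O\!\left(\frac{n\log n}{\epsilon^2(1-\epsilon)}\right) = O\!\left(\frac{n\log n}{\epsilon^3}\right)$ (absorbing the $\frac{1}{1-\epsilon}$ factor into the $\frac{1}{\epsilon}$ slack, since $\epsilon \in (0,1)$). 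To turn the expectation bound into a high-probability bound, apply the Chernoff bound \cref{lemma:chernoff}~\eqref{item:delta-at-most-1-ge} with $\delta = 1$: writing $\mu \eqd \E{|E_H|}$, we get $\prob{|E_H| \ge 2\mu} \le \exp(-\mu/3)$. We must check this failure probability is at most $n^{-4}$: this needs $\mu \ge 12 \log n$, which holds because $\mu = p\cdot|E_G(S,T)|$ is either $\frac{n\xi}{1-\epsilon} \ge \xi = \frac{60\log n}{\epsilon^2} \ge 60\log n$ when $p<1$, or equals $|E_G(S,T)|$ when $p=1$ — and in the latter case we actually have $|E_G(S,T)| \le \frac{n\xi}{1-\epsilon}$ by definition of $p$, so the edge count is still $O\!\left(\frac{n\log n}{\epsilon^3}\right)$ deterministically (a small case distinction worth spelling out). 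Hence with probability at least $1 - n^{-4}$, a single iteration uses $O\!\left(\frac{n\log n}{\epsilon^3}\right)$ memory for $H$.

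Next I would handle the number of iterations. By \cref{passes}, the while-loop runs for at most $O(\log_{1+\epsilon} n) = O(\log(n)/\epsilon)$ iterations before $S$ or $T$ becomes empty — note this uses that \VSetsUpdate applied to $(H,c,\epsilon,S,T)$ shrinks $S$ or $T$ by a $(1+\epsilon)$ factor, which holds regardless of whether the graph passed in is $G$ or the sample $H$, since the shrinkage argument in \cref{passes} only relies on the peeling rule in \cref{alg:peeling-iteration}. Taking a union bound over these $\log_{1+\epsilon} n$ iterations, the probability that every iteration respects the $O\!\left(\frac{n\log n}{\epsilon^3}\right)$ bound is at least $1 - \frac{\log_{1+\epsilon} n}{n^4}$, matching the claimed probability.

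Finally, to get the stated memory bound of $O\!\left(\frac{n\log^2 n}{\epsilon^3}\right)$ rather than $O\!\left(\frac{n\log n}{\epsilon^3}\right)$, I would account for the fact that the algorithm may need to retain edge samples (or at least the degree information) across iterations, or that sampling $H$ from the stream requires storing $E_G(S,T) \cap \Estream$ — reading the probability in \cref{line:multi-pass-define-p} requires knowing $|E_G(S,T)|$, and producing a sample of the right rate over a stream in one pass per iteration may cost an extra $O(\log n)$ factor (e.g., via reservoir-type sampling or by union-bounding a $\log n$ range of guesses for $|E_G(S,T)|$). The cleanest route is: per iteration the working memory is $O\!\left(\frac{n\log n}{\epsilon^3}\right)$ whp, the peak over all iterations is the same up to the union bound, and the extra $\log n$ comes from the overhead of realizing the sampling within the semi-streaming pass; I would state this overhead explicitly and fold it in.

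The main obstacle I anticipate is \emph{not} the Chernoff calculation — that is routine — but rather pinning down exactly what "memory used" includes: in particular, whether $H$ must be stored simultaneously with the stream-buffering needed to compute $|E_G(S,T)|$ and to draw the sample at the correct rate, which is where the second $\log n$ factor enters. I would make sure the lemma's proof is explicit that (a) the $p=1$ branch is handled separately and gives a deterministic bound, and (b) the union bound over iterations is legitimate because the events in distinct iterations are over fresh independent samples, so no subtle dependency issue arises.
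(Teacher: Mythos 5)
Your Chernoff estimate for the sample size and the union bound over passes match the paper's argument almost exactly: the paper bounds the number of edges sampled per pass by $O\bigl(\tfrac{(1+\epsilon)n\xi}{1-\epsilon}\bigr) = O\bigl(\tfrac{n\log n}{\epsilon^2}\bigr)$ using \cref{lemma:chernoff}~\eqref{item:delta-at-most-1-ge} with failure probability $1/n^4$, invokes \cref{passes} for the $O(\log_{1+\epsilon} n)$ pass count, and union-bounds.

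Where your proposal drifts is in accounting for the second $\log$ factor. The paper's proof is simpler and more direct than any of your hypotheses: it just multiplies the per-pass edge budget $O(n\xi)$ by the number of passes $O(\log_{1+\epsilon} n)$, giving $O(n\xi\cdot\log_{1+\epsilon}n) = O\bigl(\tfrac{n\log^2 n}{\epsilon^3}\bigr)$. In other words, the paper charges memory \emph{cumulatively across passes}, which corresponds to the first of your two guesses (retaining the per-pass samples). Your second guess — that the extra $\log n$ arises from the overhead of realizing the sampling rate within a pass (reservoir-type sampling or grid-guessing $|E_G(S,T)|$) — is not where the paper's factor comes from and is not invoked anywhere in the proof. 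Your instinct that the \emph{peak} memory of a single iteration is only $O\bigl(\tfrac{n\log n}{\epsilon^2}\bigr)$ (since each iteration draws a fresh $H$ and discards the old one) is in fact sharper than what the lemma states; the paper is simply being conservative by summing over passes. One minor algebraic slip: the step $O\bigl(\tfrac{1}{\epsilon^2(1-\epsilon)}\bigr)=O\bigl(\tfrac{1}{\epsilon^3}\bigr)$ is not valid as $\epsilon\to 1$ — the intended reading (and the paper's) is that $\epsilon$ is bounded away from $1$, so $\tfrac{1}{1-\epsilon}=O(1)$ and the per-pass count is $O\bigl(\tfrac{n\log n}{\epsilon^2}\bigr)$; the third $1/\epsilon$ in the final bound comes entirely from $\log_{1+\epsilon} n = O(\log(n)/\epsilon)$, not from the per-pass term.
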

\begin{proof}
By the Chernoff bound, \cref{lemma:chernoff}~\eqref{item:delta-at-most-1-ge}, every pass of \cref{alg:first} samples $O\left(\frac{(1+\epsilon)n \xi}{(1-\epsilon)}\right) = O\rb{\frac{n\log n}{\epsilon^2}}$ edges with probability $1 - \frac{1}{n^4}$. 
From \cref{passes}, the algorithm makes $O(\log_{1+\epsilon} n)$ passes. Therefore, by the union bound over all the passes, the total memory used by \cref{alg:first} is $O\left(n \xi \cdot \log_{1+\epsilon} n\right) = O\rb{\frac{n\log^2 n}{\epsilon^3}}$ with probability $1 - \frac{\log_{1+\epsilon}n}{n^4}$.
\end{proof}
\begin{theorem}\label{theorem:approx}
\cref{alg:first} produces a $2\left(1 +\epsilon\right)$-approximation of the densest directed subgraph with probability $1 - \frac{\log_{1+\epsilon}n}{n^2}$.
\end{theorem}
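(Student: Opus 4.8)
The plan is to reduce the analysis to the deterministic peeling argument hinted at after \cref{passes}, and then argue that running \VSetsUpdate on the sampled graphs $H$ instead of on $G$ changes the peeling decisions only in a controlled way. First I would establish the deterministic backbone: suppose we ran the base peeling algorithm on $G$ itself (using exact degrees $|E_G(\cdot,T)|$ and $|E_G(S,\cdot)|$ against the thresholds in \cref{alg:peeling-iteration}), and let $(S^*,T^*)$ be a true densest subgraph with $c=|S^*|/|T^*|$. The standard charging argument shows that at the first iteration $i$ at which the peeling set $A(S_i)$ or $B(T_i)$ contains a vertex of $S^*$ (resp. $T^*$), the current pair $(S_i,T_i)$ already contains $S^*\subseteq S_i$, $T^*\subseteq T_i$, and the total number of edges removed so far from $E_G(S_i,T_i)$ is at most a $(1+\eps)$ fraction of what remains — because each peeled vertex carried at most $(1+\eps)|E_G(S,T)|/|S|$ (or $/|T|$) edges, and the sizes shrink geometrically. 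Combining $|E_G(S_i,T_i)|\ge |E_G(S^*,T^*)|/(2+O(\eps))$ with $|S_i|\cdot|T_i|$ not too much larger than $|S^*|\cdot|T^*|$ — controlled again by the side condition $|S_i|/|T_i|\ge c$ vs $<c$ — yields $\rho(S_i,T_i)\ge \rho(S^*,T^*)/(2+O(\eps))$. Rescaling $\eps$ by a constant gives the $2(1+\eps)$ bound. Since \cref{alg:first} keeps the best $(S^*,T^*)$ seen over all iterations, it suffices that \emph{some} iterate is this good.

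Second, I would transfer this from $G$ to the sampled graphs. The key point is that in \cref{alg:peeling-iteration} the peeling threshold on $H$ is $(1+\eps)|E_H(S,T)|/|S|$, and by Chernoff (exactly as in \cref{gtoh}, with $p = \min\{n\xi/((1-\eps)|E_G(S,T)|),1\}$) we have, whp simultaneously for all $v$: either $p\,d_G(v) \le (1+\eps) d_H(v)$ when $d_G(v) \ge \xi/p$, or $d_H(v) < 2\xi$ when $d_G(v) < \xi/p$. Also whp $|E_H(S,T)| = p\,|E_G(S,T)|(1\pm\eps)$, which by choice of $p$ is $\Theta(n\xi)$ so the "low-degree" threshold $2\xi$ is a $\Theta(1/n)$-fraction of $|E_H(S,T)|$, i.e. negligible compared to the peeling threshold $(1+\eps)|E_H(S,T)|/|S|$ whenever $|S| = o(n)$ — and while $|S| = \Theta(n)$ one checks directly that $\xi$ is below threshold too. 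Hence a vertex survives peeling in $H$ essentially iff its scaled $G$-degree exceeds the $G$-threshold, up to $(1+\eps)^{O(1)}$ slack; in particular no vertex of the densest subgraph $(S^*,T^*)$ that would survive in the exact run gets peeled in the sampled run before the "good" iteration is reached, and the edge-count bookkeeping above only degrades by further $(1+\eps)$ factors. This is the step I expect to be the main obstacle: one must be careful that $|E_G(S,T)|$ can change from iteration to iteration (so $p$ changes), that the Chernoff guarantee of \cref{gtoh} is invoked afresh each pass and union-bounded over the $O(\log_{1+\eps} n)$ passes (contributing the $\log_{1+\eps} n / n^2$ failure probability in the statement, after absorbing the $1/n^3$ per-pass bound), and that the inequality comparing $d_H(v)$ to $(1+\eps)|E_H(S,T)|/|S|$ is robust at the boundary between the two cases of \cref{gtoh}.

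Finally I would assemble the pieces: condition on the whp event that \cref{gtoh} and the edge-count concentration hold in every pass; under this event the sampled peeling process tracks the exact peeling process closely enough that the "good" iterate $(S_i,T_i)$ with $S^*\subseteq S_i$, $T^*\subseteq T_i$, and bounded edge loss still occurs; at that iterate $\rho(S_i,T_i)\ge \rho(S^*,T^*)/(2(1+\eps))$ after reparametrizing $\eps$; and \cref{alg:first} returns a pair at least this dense. The failure probability is the union bound over passes, $O(\log_{1+\eps} n)\cdot O(1/n^3) \le \log_{1+\eps} n / n^2$, matching the claim.
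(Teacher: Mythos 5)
Your proposal takes a genuinely different route from the paper, and it has gaps. The paper does not couple the sampled run to an exact run on $G$, nor does it wait for the first iteration at which an optimal vertex is peeled. Instead it applies a Charikar-style charging argument directly to the sampled run: for each vertex $i$ removed from $S$, it bounds $\sqrt{c}\,|E_G(i,T)|$ (the out-degree to $T$ at the moment of removal, written $\doutrem(i)$) by $(1+\eps)^3\rho(S,T)$ via \cref{gtoh} together with concentration of $|E_H(S,T)|$ around $p|E_G(S,T)|$, and symmetrically for vertices removed from $T$. Since every edge $(u,v)$ of the optimal $(\tS,\tT)$ is counted in $\doutrem(u)$ or $\dinrem(v)$ depending on which endpoint leaves its set first, one has $|E(\tS,\tT)|\le |\tS|\,d_{out}^* + |\tT|\,d_{in}^*$, and combining with $|\tS|/|\tT|=c$ gives $\rho^*(G)\le 2(1+\eps)^3\rho(S^*,T^*)$. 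This is entirely self-contained on the sampled run and needs no comparison to what the exact algorithm would have done.

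Two concrete problems with your version. First, your deterministic backbone asserts that at the first iteration where an optimal vertex is about to be peeled, $|S_i|\cdot|T_i|$ is ``not too much larger than $|S^*|\cdot|T^*|$'' -- there is no reason for this to hold: $(S_i,T_i)$ can stay $\Theta(n)\times\Theta(n)$ while the optimal pair is tiny. What the ``wait for first optimal vertex'' route actually needs is a lower bound on the peeled vertex's degree \emph{inside the optimal set}, coming from local optimality of $(\tS,\tT)$ (removing one vertex from $\tS$ cannot increase density, so $|E_G(u,\tT)|\ge |E_G(\tS,\tT)|/(2|\tS|)$), combined with the ratio condition $|S_i|/|T_i|\ge c$; it does not need, and cannot get, control on $|S_i||T_i|$. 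Second, the coupling between the exact peeling process on $G$ and the sampled process on $H$ is not maintained: the $(1+\eps)^{O(1)}$ slack you acknowledge means the two processes can already peel different vertex sets at the first iteration, after which they evolve on different $(S_i,T_i)$ pairs and the comparison has no content -- there is no ``corresponding'' good iteration to reach. If you want to keep the ``first optimal vertex peeled'' strategy, the fix is to drop the coupling entirely and apply that argument directly to the sampled run, converting $E_H$-degrees at the moment of removal to $E_G$-degrees via \cref{gtoh}; with the corrected bookkeeping above this can be made to work, but it remains a different and more delicate route than the paper's charging argument.
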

\begin{proof}
Consider a pass with $|S|/|T|\geq c$. Hence, vertices are removed from $S$ in that pass.
Let $p = n \xi / ((1-\epsilon)|E_G(S, T)|)$; the same probability is used on \cref{line:multi-pass-define-p} of \cref{alg:first}.
Then, for all vertices $i\in A(S)$, where $A(S)$ is defined in \VSetsUpdate, we have
\begin{eqnarray}
\sqrt{c}|E_G(i,T)| & \leq& \sqrt{c}\cdot\frac{(1+\epsilon)^2|E_H(S,T)|}{p|S|} \label{eq:second-bound}\\
&\leq& \sqrt{c}\cdot\left(1+\epsilon\right)^3\frac{|E_G(S,T)|}{|S|}\label{eq:third-bound}\\\
&\leq&  \sqrt{c}\cdot\left(1 +\epsilon\right)^3\frac{|E_G(S,T)|}{\sqrt{c|S||T|}} \nonumber\\
&=& \left(1 +\epsilon\right)^3\frac{|E_G(S,T)|}{\sqrt{|S||T|}}\nonumber\\
&=& \left(1 +\epsilon\right)^3\rho(S,T).\label{eq:E_G(i,T)-upper-bound}
\end{eqnarray}
For \eqref{eq:second-bound}, we either have that $|E_G(i,T)| \geq \xi / p$ or $|E_G(i,T)| < \xi / p$. If $|E_G(i,T)| \geq \xi / p$, we use \cref{gtoh} to get 
\begin{eqnarray*}
\sqrt{c}|E_G(i,T)| &\leq& \sqrt{c}\cdot \frac{(1+\epsilon)|E_H(i,T)|}{p}\\
&\leq& \sqrt{c}\cdot\frac{(1+\epsilon)^2|E_H(S,T)|}{p|S|}.
\end{eqnarray*}
On the other hand, if $|E_G(i,T)| < \xi / p$, \eqref{eq:second-bound} follows directly.
Additionally, since $\E{|E_H(S, T)|} = p |E_G(S, T)|$, a direct application of Chernoff bound, \cref{lemma:chernoff}~\eqref{item:delta-at-most-1-ge}, yields \eqref{eq:third-bound}. 

Similarly, for a pass with $|S|/|T|<c$, we have that for all vertices $j\in B(T)$,
\begin{equation}\label{eq:E_G(S,j)-upper-bound}
    \frac{1}{\sqrt{c}}|E_G(S,j)| \leq \left(1 +\epsilon\right)^3\rho(S,T).
\end{equation}
We define $\doutrem(i) \eqdef |E_G(i, T)|$ and $\dinrem(j) \eqdef |E_G(S,j)|$ where $T$ is the respective set during the pass $i$ is removed from $S$ and, similarly, $S$ is the respective set during the pass $j$ is removed from $T$.
Let $d_{out}^* = \max_{i\in V}\{\doutrem(i)\}$ and $d_{in}^* = \max_{j\in V}\{\dinrem(j)\}$.
Then, for $\tS,\tT \subseteq V$ with $|\tS|/|\tT| = c$ that maximizes $\rho(\tS,\tT) = \rho^*(G)$, we have that
\begin{eqnarray}
\rho^*(G) &=& \frac{|E(\tS,\tT)|}{\sqrt{|\tS||\tT|}} \leq \frac{|\tS|\cdot d_{out}^* + |\tT|\cdot d_{in}^*}{\sqrt{|\tS||\tT|}}\nonumber\\
&=& \sqrt{c}\cdot d_{out}^* + \frac{1}{\sqrt{c}}\cdot d_{in}^*\leq 2\left(1 +\epsilon\right)^3\rho(S^*,T^*)\label{eq:fourth-bound}
\end{eqnarray}
where \eqref{eq:fourth-bound} comes from using \eqref{eq:E_G(i,T)-upper-bound} and \eqref{eq:E_G(S,j)-upper-bound}.
(The last chain of inequalities is also shown in \cite{charikar2003greedy}.)

Therefore, the algorithm produces a $2\left(1 +\epsilon\right)^3$-approximation. Note that we use \cref{gtoh} on each vertex for every pass. Therefore, by the union bound, we have a success probability of $1 - n\cdot\log_{1+\epsilon}n\cdot\frac{1}{n^3} = 1 - \frac{\log_{1+\epsilon}n}{n^2}$.
\end{proof}

\subsection{Improved Single-pass Algorithm}
Building on \cref{alg:first}, we develop a single-pass semi-streaming algorithm for $(2+\eps)$-approximate directed densest subgraph when the stream of edges is randomized.
The main idea behind our algorithm is to leverage the additional randomization of the stream and avoid performing a new pass to estimate vertex degrees when $S$ or $T$ gets updated; these degrees are needed to execute \cref{line:peel-S,line:peel-T} of \VSetsUpdate.

Let $(S', T') = \VSetsUpdate(H, c, \eps, S, T)$. Observe that $|E(u, T')|$ and $|E(S', v)|$, for $u \in S'$ and $v \in T'$, can be estimated by uniformly at random sampling $O(n \xi)$ edges from $E(S', T')$.
\emph{How can those edges be sampled without performing a new pass?}
If our algorithm has not already seen any edge in $E(S', T')$ from the stream, this task would be easy -- collect edges in $E(S', T')$ from the stream as long as fewer than the desired many are seen.
However, our algorithm already has a subset $E'$ of $E(S', T')$ in its memory, which should be considered. Assuming that the sampling probability is known -- which is not known and we discuss it in a moment -- then \SSSample (\cref{alg:sample}) outputs the desired sample while considering $E'$.

\begin{algorithm}
\caption{(\SSSample)} \label{alg:sample}
\textbf{Input:} A set of edges $E' \subseteq E_G(S,T)$, vertex sets $S$ and $T$, probability $p$, an estimate $s$ of $|E_G(S,T)|$
\textbf{Output:} Uniformly random sample of edges from the union of the stream and $E'$
\begin{algorithmic}[1]
\State Let $H_1$ be a subset of $E'$, where each in $E'$ is included in $H_1$ with probability $p$.

\State Sample $x$ from the binomial distr.~$B(s - |E'|, p)$. \label{line:sample-x}

\State Set $H_2$ to next $x$ edges in $E_G(S,T)$ from the stream.

\State \Return{$H_1\cup H_2$}
\end{algorithmic}
\end{algorithm}
\SSSample as input takes value $p$ where, ideally, $p = n \xi / |E(S, T)|$. Unfortunately, $|E(S, T)|$ is not known; \SSSample has access to $E' \subseteq E(S, T)$ but $\Estream(S, T) = E(S, T) \setminus E'$ is in the rest of the stream still unseen by the algorithm. To alleviate this, our algorithm \emph{approximates} $|E(S, T)|$.
By using \SSSample, this enables generating samples of the $E(S, T)$ sequence without going through the entire stream each time. In doing so, we are able to create \cref{alg:random}, which uses only a single pass.



\begin{algorithm}
\caption{A single-pass randomized-stream directed densest subgraph algorithm}\label{alg:random}
\textbf{Input:} $c > 0$, and $\epsilon \in (0,1)$\\
\textbf{Output:} A $(2+\eps)$-approximate directed densest subgraph
\begin{algorithmic}[1]
\State Initialize each of $S, T,S^*,T^*$ to $V$

\State $E' \gets \emptyset$\Comment{The subset of $E_G(S,T)$ seen so far}

\While{$\Estream \neq \emptyset$}

    \State $E_A \gets$ next $n \xi$ edges from $\Estream$
    \If{$|E_{A}(S,T)| < 2 \xi$ or $\Estream = \emptyset$}
        \State $E'\gets E' \cup E_A(S,T) \cup \Estream(S,T)$
        
        \State If the densest subgraph in $E'$ is denser than $(S^*, T^*)$, update $(S^*, T^*)$ to be that subgraph.

    \Else
        \Comment{Estimate of $|E_G(S,T)|$ based on $E_A$.}
        \State$s\gets (1-\epsilon) \cdot \frac{|E_A(S,T)|}{|E_A|} \cdot \rb{|\Estream| + n\xi} + |E'|$ \label{line:approx-s}
        
        \State $E'\gets E' \cup E_A(S,T)$

        \State $p \gets \frac{n \xi}{(1-\epsilon) s}$

        \State $H \gets \SSSample\rb{E', S, T, p, s}$ \label{line:streaming-H}
        
        \If{$p > 1$}
            \State $H \gets E' \cup \Estream(S, T)$
        \EndIf
        
        \State $(S,T)\gets \VSetsUpdate(H, c, \epsilon, S, T)$
    
        \If{$\rho(S,T) > \rho(S^*,T^*)$}
            \State $S^*\gets S, T^*\leftarrow T$
        \EndIf
        
        \State $E' \leftarrow (E'\cup E_H) \cap E_G(S,T)$
        
        
        
            
    \EndIf

\EndWhile

\State \Return{$S^*,T^*$}
\end{algorithmic}
\end{algorithm}

Notice that $E_A$ in \cref{alg:random} is a sample of $n \xi$ edges taken uniformly at random from \emph{the rest} of the stream.
Additionally, graph $H$ uses \SSSample to create an approximately uniform sample of edges from graph $E(S, T)$. In both cases, the edges of the samples are not chosen independently, so the standard Chernoff bound cannot be applied.
Therefore, to still establish some connection between the degrees of vertices of the sampled graphs and the original graph, we consider \cref{chernoff}, a generalized version of the Chernoff bound which holds for negatively correlated variables. 
Boolean random variables $X_1, X_2,\ldots, X_n$ are negatively correlated if any subset $S$ of $\{X_1, X_2, \ldots, X_n\}$ and any element $a\in S$ follows $\Pr(a=1|\forall_{b\in S-\{a\}}b=1)\leq \Pr(a = 1)$.
\begin{lemma}[Folklore]
\label{chernoff}
Let $X = \sum_{i=1}^n X_i$ where $X_1, X_2, \ldots X_n$ are negatively correlated boolean random variables. Then, for $\epsilon \in (0,1)$, we have that
\[\Pr(|X-\E{X}| > \epsilon \E{X}) \leq 3\exp(-\epsilon^2 \E{X}/3).\]
\end{lemma}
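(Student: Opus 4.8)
The plan is to run the classical moment-generating-function argument for Chernoff bounds and to check that the only place where full independence enters — the factorization of $\E{\prod_i e^{tX_i}}$ — goes through under negative correlation.

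\textbf{Step 1 (the ``cylinder'' inequality).} I would first show that negative correlation implies $\E{\prod_{i\in Q}X_i}\le\prod_{i\in Q}\E{X_i}$ for every $Q\subseteq\{1,\dots,n\}$. Since the $X_i$ are boolean, $\E{\prod_{i\in Q}X_i}=\prob{\bigwedge_{i\in Q}X_i=1}$; expanding this with the chain rule, peeling off one index $a\in Q$ at a time, and applying the defining bound $\prob{X_a=1\mid\bigwedge_{b\in Q\setminus\{a\}}X_b=1}\le\prob{X_a=1}$ at each step gives the inequality by induction on $|Q|$. The same argument applied to the complementary events (conditioning on the other variables being $0$) yields $\E{\prod_{i\in Q}(1-X_i)}\le\prod_{i\in Q}(1-\E{X_i})$, which is what the lower tail will need.

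\textbf{Step 2 (bounding the MGF).} For $t>0$ and boolean $X_i$ we have $e^{tX_i}=1+(e^t-1)X_i$, hence $\prod_i e^{tX_i}=\sum_{Q}(e^t-1)^{|Q|}\prod_{i\in Q}X_i$. Taking expectations, and using that every coefficient $(e^t-1)^{|Q|}$ is nonnegative together with Step 1, gives $\E{e^{tX}}\le\sum_{Q}(e^t-1)^{|Q|}\prod_{i\in Q}\E{X_i}=\prod_i\rb{1+(e^t-1)\E{X_i}}\le\exp\rb{(e^t-1)\mu}$, where the last step uses $1+x\le e^x$ and $\sum_i\E{X_i}=\mu$.

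\textbf{Step 3 (finish as in the textbook proof).} For the upper tail, Markov's inequality applied to $e^{tX}$ with $t=\ln(1+\eps)$ yields $\prob{X\ge(1+\eps)\mu}\le\rb{e^{\eps}/(1+\eps)^{1+\eps}}^{\mu}\le\exp(-\eps^2\mu/3)$ for $\eps\in(0,1)$. For the lower tail I would rerun Steps 2--3 with the rewriting $e^{-tX_i}=e^{-t}+(1-e^{-t})(1-X_i)$, so that the expansion again has only nonnegative coefficients, now multiplying the monomials $\prod_{i\in Q}(1-X_i)$; the complementary cylinder bound from Step 1 then gives $\E{e^{-tX}}\le\exp\rb{(e^{-t}-1)\mu}$ and, after optimizing $t$, $\prob{X\le(1-\eps)\mu}\le\exp(-\eps^2\mu/2)$. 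A union bound over the two tails gives $\prob{|X-\E{X}|>\eps\E{X}}\le 2\exp(-\eps^2\mu/3)\le 3\exp(-\eps^2\mu/3)$.

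\textbf{Expected obstacle.} The only subtlety is the sign bookkeeping in Step 2: the cylinder inequality only helps when each monomial $\prod_{i\in Q}X_i$ (resp.\ $\prod_{i\in Q}(1-X_i)$) carries a nonnegative coefficient, which is exactly why one rewrites $e^{tX_i}=1+(e^t-1)X_i$ for the upper tail and $e^{-tX_i}=e^{-t}+(1-e^{-t})(1-X_i)$ for the lower tail, and why the lower tail genuinely needs the complementary form of negative correlation rather than the one used for the upper tail. Everything after that is the standard Chernoff computation.
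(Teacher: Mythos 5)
The paper offers no proof of this lemma (it is labeled ``Folklore''), so there is nothing to compare against; what you wrote is the standard Panconesi--Srinivasan/Impagliazzo--Kabanets argument, and Steps~2--3 (MGF expansion into monomials with nonnegative coefficients, the cylinder inequality, optimizing $t$, and the crude union bound into the stated constant $3$) are all correct.

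There is, however, one genuine gap, and it sits exactly where you flagged your ``expected obstacle.'' The paper's definition of negative correlation is one-sided: it only states $\prob{X_a = 1 \mid \forall_{b\in S\setminus\{a\}}\, X_b = 1} \le \prob{X_a = 1}$. From this, the chain-rule argument gives you $\E{\prod_{i\in Q}X_i}\le\prod_{i\in Q}\E{X_i}$ and hence the upper tail, but it does \emph{not} give you the complementary inequality $\E{\prod_{i\in Q}(1-X_i)}\le\prod_{i\in Q}(1-\E{X_i})$ that Step~2 of the lower-tail branch requires. That inequality needs the separate hypothesis $\prob{X_a = 0 \mid \forall_{b\in S\setminus\{a\}}\, X_b = 0} \le \prob{X_a = 0}$, which is not a logical consequence of the one-sided condition (easy counterexamples exist with two boolean variables). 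So ``the same argument applied to the complementary events'' is not a derivation from the paper's definition; it is an extra assumption. In the paper's applications (\Cref{newgtoh} and \Cref{newhtog}, where $H$ is a uniformly random fixed-size subset of the edges) the indicator variables are in fact negatively \emph{associated}, so both conditions hold and the lemma applies; but to make your proof self-contained you should either state the complementary conditioning bound as part of the hypothesis, or observe that a sample of fixed size $k$ from a finite set is negatively associated (Joag-Dev--Proschan), which yields both monomial inequalities simultaneously.
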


With \cref{chernoff}, we prove \cref{newgtoh}, which considers sampling using \SSSample, and \cref{newhtog}, which considers sampling a fixed amount of edges uniformly at random. These lemmas will be used to analyze \cref{alg:random}.

\begin{lemma}
\label{newgtoh}
Let $G = (V, E)$ be a graph, $\epsilon \in (0,1)$, and $p\in (0,1]$.
Let $H \gets \SSSample\rb{E', S, T, p, s}$ be obtained on \cref{line:streaming-H} of \cref{alg:random}.
Then, for all $v \in G$, the following hold with probability $1 - \frac{1}{n^3}$: (i) if $\din_G(v) \geq \xi / p$ then $p\din_G(v) \leq (1+\epsilon)\din_H(v)$, (ii) if $\din_G(v) < \xi / p$ then $\din_H(v) < 2 \xi$ (same claims hold for out-degrees $\dout_G(v)$ and $\dout_H(v)$).
\end{lemma}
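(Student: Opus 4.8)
The plan is to mimic the proof of \cref{gtoh}, with the extra work of handling the fact that \SSSample produces a sample whose edges are \emph{not} independent but only negatively correlated, and that its size is controlled by a binomial draw on line~\ref{line:sample-x} rather than by independent coin flips on every edge of $E_G(S,T)$. The key observation is that $H = H_1 \cup H_2$ where $H_1$ keeps each edge of $E'$ independently with probability $p$, and $H_2$ is the next $x$ edges in the stream from $E_G(S,T)$ with $x \sim B(s - |E'|, p)$. Conditioned on the randomness of the stream order (which is uniform, since the stream is randomized), $H_2$ is a uniformly random subset of $\Estream(S,T)$ of size $x$; and a uniformly random subset of fixed size is a textbook example of a negatively correlated family of indicator variables. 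Since $H_1$ is independent of $H_2$ and itself a product distribution, the full collection of edge-inclusion indicators for $H$ is negatively correlated, so \cref{chernoff} applies to $\din_H(v) = \sum_{e \in E_G(v,T)} \mathbf{1}[e \in H]$.

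First I would compute $\E{\din_H(v)}$. Splitting $E_G(v,T)$ into its part inside $E'$ and its part still in the stream, the expected contribution from $H_1$ is $p \cdot |E_G(v,T) \cap E'|$, and the expected contribution from $H_2$ is $|E_G(v,T) \cap \Estream(S,T)| \cdot \E{x}/|\Estream(S,T)| = p \cdot |E_G(v,T) \cap \Estream(S,T)|$, using $\E{x} = (s - |E'|)p$ and the identity $s - |E'| = |\Estream(S,T)|$ enforced by how $s$ is defined on \cref{line:approx-s} together with $E' = E_G(S,T) \setminus \Estream(S,T)$ at that point. Hence $\E{\din_H(v)} = p\,\din_G(v)$ exactly, just as in \cref{gtoh}. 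From here the two cases are identical to that lemma: if $\din_G(v) \ge \xi/p$ then $\E{\din_H(v)} = p\din_G(v) \ge \xi = 60\log n/\eps^2$, and \cref{chernoff} with deviation $\eps/2$ gives $\prob{\din_H(v) < (1-\eps/2)\E{\din_H(v)}} \le 3\exp(-\eps^2 \E{\din_H(v)}/12) \le 3\exp(-5\log n) \le n^{-4}$, which rearranges to $p\din_G(v) \le (1+\eps)\din_H(v)$ (using $(1-\eps/2)(1+\eps) \ge 1$ for $\eps \in (0,1)$); if $\din_G(v) < \xi/p$ then $\E{\din_H(v)} < \xi$ and \cref{chernoff} bounds $\prob{\din_H(v) \ge 2\xi} \le \prob{\din_H(v) - \E{\din_H(v)} \ge \xi} \le \tfrac{3}{2}\exp(-\xi^2/(3\E{\din_H(v)})) \le \tfrac{3}{2}\exp(-\xi/3) \le n^{-4}$. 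A union bound over all $v$ and over in- and out-degrees gives the $1 - 1/n^3$ guarantee, and the argument for out-degrees is symmetric.

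The main obstacle is justifying the two structural claims that make \cref{chernoff} applicable and make the expectation come out to exactly $p\din_G(v)$: (a) that $s - |E'| = |\Estream(S,T)|$ at the moment \SSSample is called — this needs one to track the invariant that $E'$ always equals the already-seen part of $E_G(S,T)$ (maintained by the update $E' \gets (E' \cup E_H) \cap E_G(S,T)$ and by $E' \gets E' \cup E_A(S,T)$), so that the binomial $B(s-|E'|,p) = B(|\Estream(S,T)|,p)$ genuinely models "include each remaining edge with probability $p$"; and (b) that drawing a size-$x$ uniform subset of $\Estream(S,T)$, with $x$ itself binomial, yields negatively correlated indicators jointly with the independent $H_1$-indicators. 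For (b) I would note that conditioning on $x$, a uniform $x$-subset has negatively correlated indicators (pairwise and in the strong sense required), and mixing over $x \sim B(|\Estream(S,T)|,p)$ preserves this because the mixture of $B(m,p)$ over a size-$m$ uniform subset is exactly the i.i.d.\ $\mathrm{Bernoulli}(p)$ product on $m$ edges — so in fact the $H_2$-indicators, unconditionally, are i.i.d.\ Bernoulli$(p)$, and combined with the independent $H_1$ the whole family is independent, not merely negatively correlated. (If one prefers not to invoke that collapse, negative correlation conditioned on $x$ plus independence of $x$ from the internal order still suffices via \cref{chernoff}.) Once these bookkeeping facts are in place, the concentration computation is line-for-line the one in \cref{gtoh}.
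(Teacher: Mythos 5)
Your high-level architecture matches the paper's: split $H$ into the $E'$-part ($H_1$, i.i.d.\ coin flips) and the stream-part ($H_2$, a fixed-size uniform subset), argue negative correlation of the combined indicator family, and rerun the concentration computation from \cref{gtoh} with \cref{chernoff} in place of the independent Chernoff bound. Where the proposal goes wrong is in the bookkeeping you call ``fact (a)''. You assert the identity $s - |E'| = |\Estream(S,T)|$, and use it twice: to conclude $\E{\din_H(v)} = p\,\din_G(v)$ \emph{exactly}, and to invoke the collapse ``uniform size-$x$ subset of an $m$-set with $x\sim B(m,p)$ equals i.i.d.\ Bernoulli$(p)$,'' which requires the binomial parameter to literally be the size of the ground set. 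But $s$ is \emph{not} $|E_G(S,T)|$; it is a deliberately deflated estimate. Reading \cref{line:approx-s}, $s - |E'| = (1-\eps)\cdot\frac{|E_A(S,T)|}{|E_A|}\cdot(|\Estream| + n\xi)$ is a $(1-\eps)$-discounted empirical estimate of $|\Estream(S,T)|$, not the exact count — indeed the whole point of the algorithm is that $|E_G(S,T)|$ is unknown. So $s - |E'| \neq |\Estream(S,T)|$ in general, the collapse to i.i.d.\ Bernoulli$(p)$ does not hold, and $\E{\din_H(v)}$ is not $p\,\din_G(v)$ on the nose.

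The correct expectation, which is what the paper computes, is
\[
\E{\din_H(v)} \;=\; \frac{p\cdot s}{|E_G(S,T)|}\,\din_G(v) \;=\; \frac{n\xi}{(1-\eps)\,|E_G(S,T)|}\,\din_G(v),
\]
i.e.\ the \emph{effective} inclusion probability is $n\xi/((1-\eps)|E_G(S,T)|)$, which coincides with the sampling probability used in \cref{alg:first} and \cref{gtoh}, but does \emph{not} equal the $p$ fed into \SSSample. The paper's route is to observe this coincidence and then inherit the \cref{gtoh} computation verbatim (with \cref{chernoff} replacing independence). Your proposal can be repaired the same way: drop the false identity, keep the negative-correlation argument (which survives — a uniform fixed-size subset is negatively correlated for \emph{any} size, including a random one independent of the internal order — you do not need the collapse to i.i.d.), compute $\E{\din_H(v)}$ as above, and note the expectation matches \cref{gtoh}'s. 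As written, though, the exact-equality claim and the mixture-collapse step are a genuine gap.
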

\begin{proof}
Consider a sample $H$ of $G$ obtained by \SSSample. Number the edges in $E$ from $1$ to $m = |E|$. For each $i = 1 \ldots m$, define a Boolean random variable $X_i$ which equals $1$ iff $i \in H$ and $0$ otherwise.

Let $S$ be an arbitrary subset of $\{X_1, X_2, \ldots, X_m\}$ and $X_i$ be an arbitrary element of $S$.
Recall that \SSSample uses two sets to output $H$: one is the input set $E'$, and the other one is the rest of the stream $\Estream$.
To sample from $\Estream$, \SSSample first samples $x$ on \cref{line:sample-x}, and then fetches the next $x$ edges from $\Estream \cap E_G(S, T)$; hence, that edge-sample is correlated as the number of edges is fixed.
If $i \in E'$, then $X_i$ is independent of all the other random variables, and therefore, $\Pr(X_i=1|\forall_{b\in S-\{X_i\}}b=1)= \Pr(X_i = 1)$.
If $i \in \Estream$ instead, we observe that if the sample already contains an edge $j$, then it ``reduces'' the probability of $i$ appearing in $H$.
Overall, the random variables are negatively correlated.

Now, under the conditions of our lemma, we have that $p \leq 1$. So, when we use \SSSample, we see that
\begin{eqnarray*}
\E{\din_H(v)} &=& \E{\frac{p \cdot s}{|E_G(S,T)|}\cdot \din_G(v)}\\
&=& \frac{n\xi}{(1-\epsilon)|E_G(S,T)|}\din_G(v)
\end{eqnarray*}
since \SSSample, in expectation, samples $ps$ edges out of all the edges in $E_G(S,T)$. Notice that the result is the same as $\E{\din_H(v)}$ in \cref{gtoh}. Therefore, following the proof of \cref{gtoh} but using \cref{chernoff} instead of the Chernoff bound, we prove \cref{newgtoh}.
\end{proof}
\begin{lemma}
\label{newhtog}
Let $G = (V, E)$ be a graph, $\epsilon \in (0,1)$. Assuming $|E| \geq n\xi$, let $H$ be a sample of $n\xi$ edges from $G$ chosen uniformly at random with probability $p = \frac{n\xi}{|E|}$. Then, for all $v \in G$, the following hold with probability $1 - \frac{1}{n^3}$: (i) if $\din_H(v) \geq 2\xi$ then $|\din_H(v) - p\din_G(v)| \leq \epsilon \din_H(v)$, (ii) if $\din_H(v) < 2\xi$ then $\din_G(v) \leq \frac{2(1+\epsilon)\xi}{p}$ (same claims hold for out-degrees $\dout_G(v)$ and $\dout_H(v)$).
\end{lemma}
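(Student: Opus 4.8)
The plan is to reuse the template of the proofs of \cref{gtoh} and \cref{newgtoh}: fix a vertex $v\in V$, introduce one Boolean variable per incoming edge $e$ of $v$ equal to $1$ iff $e\in H$, and apply \cref{chernoff}. Because $H$ is a uniformly random subset of $E$ of a \emph{fixed} size $n\xi$ (sampling without replacement from a fixed population with a fixed sample size), these indicators are negatively correlated, so \cref{chernoff} applies; I will also use the standard Chernoff upper tail in the regime where the deviation exceeds a constant factor of the mean, which holds for negatively correlated Booleans by the same moment-generating-function argument. Write $X\eqdef\din_H(v)$ and $\mu\eqdef\E{X}=p\din_G(v)$ (by symmetry each edge is retained with marginal probability $p$). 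The one structural difference from \cref{gtoh} is that the hypotheses here are stated in terms of the \emph{observed} quantity $\din_H(v)$, not $\din_G(v)$; I therefore split on the deterministic quantity $\mu$ --- equivalently, on whether $\din_G(v)$ is above or below the average in-degree $\xi/p=|E|/n$ --- and then translate the resulting bounds.

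\emph{Case $\mu\ge\xi$.} A single application of \cref{chernoff} with $\delta=\epsilon/2$ gives, with probability at least $1-3\exp(-\epsilon^2\xi/12)=1-3/n^5$, the two-sided bound $(1-\tfrac{\epsilon}{2})\mu\le X\le(1+\tfrac{\epsilon}{2})\mu$. On this event: for (i), the lower bound yields $\mu\le X/(1-\tfrac{\epsilon}{2})\le 2X$ (using $\epsilon<1$), hence $|X-\mu|\le\tfrac{\epsilon}{2}\mu\le\epsilon X$, which in particular covers the case $X\ge 2\xi$; for (ii), if $X<2\xi$ but nonetheless $\din_G(v)>2(1+\epsilon)\xi/p$, i.e. $\mu>2(1+\epsilon)\xi$, then $X\ge(1-\tfrac{\epsilon}{2})\mu>2(1-\tfrac{\epsilon}{2})(1+\epsilon)\xi\ge 2\xi$ because $(1-\tfrac{\epsilon}{2})(1+\epsilon)=1+\tfrac{\epsilon}{2}(1-\epsilon)\ge 1$, contradicting $X<2\xi$, so $\din_G(v)\le 2(1+\epsilon)\xi/p$.

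\emph{Case $\mu<\xi$.} Here (ii) is immediate and deterministic: $\din_G(v)=\mu/p<\xi/p\le 2(1+\epsilon)\xi/p$. For (i) it suffices to show $X<2\xi$ with high probability, as then the hypothesis of (i) is vacuous; exactly as in the second case of the proof of \cref{gtoh} (with \cref{chernoff} replacing \cref{lemma:chernoff}), the event $X\ge 2\xi$ requires a deviation $X-\mu>\xi$ on top of a mean $\mu<\xi$, so $\Pr(X\ge 2\xi)\le\exp(-\Omega(\xi))\le 1/n^4$. Each vertex, for each of its in/out degree, thus relies on a single Chernoff application failing with probability at most $3/n^5$ (or less); a union bound over the $2n$ events gives overall success probability $1-1/n^3$, and the out-degree statements follow by the symmetric argument.

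The proof is mostly bookkeeping rather than a hard idea; the two places that need care are (a) taking $\delta=\epsilon/2$ --- not $\epsilon$ --- in the concentration step so that $|X-\mu|\le\tfrac{\epsilon}{2}\mu$ upgrades to $|X-\mu|\le\epsilon X$ via $\mu\le 2X$, together with the companion inequality $(1-\tfrac{\epsilon}{2})(1+\epsilon)\ge 1$ that turns ``$\mu$ large'' into ``$X\ge 2\xi$'' for part (ii); and (b) the $\mu<\xi$ subcase, where the deviation can exceed a constant factor of the mean, so we must invoke the constant-factor Chernoff tail (valid for negatively correlated variables) rather than the form literally stated in \cref{chernoff}, mirroring what the proof of \cref{gtoh} already does.
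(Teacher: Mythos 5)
Your proof is correct and takes essentially the same route as the paper's: negative correlation of the edge indicators under fixed-size uniform sampling, a split on $\din_G(v)\gtrless\xi/p$ (equivalently $\mu\gtrless\xi$), and a Chernoff-style bound, with the subcase $\mu<\xi$ handled by the constant-factor upper tail just as in the proof of \cref{gtoh}. You are in one respect more careful than the paper's write-up, which establishes part (i) by citing \cref{gtoh} even though that lemma's conclusion $p\din_G(v)\le(1+\epsilon)\din_H(v)$ is only a one-sided bound, whereas part (i) asserts the two-sided $|\din_H(v)-p\din_G(v)|\le\epsilon\din_H(v)$; your direct application of \cref{chernoff} with $\delta=\epsilon/2$, together with the translation $\mu\le 2X$, supplies the missing direction explicitly.
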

\begin{proof}
Note that we can apply \cref{chernoff} since we perform uniform random sampling with a fixed number of edges.
Therefore, when referencing \cref{gtoh} in this proof, it will be using \cref{chernoff} instead of the Chernoff bound in the context of uniform random sampling with a fixed number of edges.

Consider a vertex $v \in V$. Note that if $\din_H(v) \geq 2\xi$, then $\din_G(v) \geq \xi/p$ with probability $1 - \frac{1}{n^4}$ using the Chernoff bound. Therefore, with \cref{gtoh}, we have that $|\din_H(v) - p\din_G(v)| \leq \epsilon \din_H(v)$.

If $\din_H(v) < 2\xi$, we split our analysis into two cases. If $\din_G(v) \geq \xi/p$, then
\[\din_G(v) \leq \frac{(1+\epsilon)\din_H(v)}{p} < \frac{2(1+\epsilon)\xi}{p}\]
using \cref{gtoh}.
Otherwise, if $\din_G(v) < \xi/p$, then $\din_G(v) < \frac{2(1+\epsilon)\xi}{p}$ holds as well.
Therefore, the statements in \cref{newhtog} hold with probability $1 - \frac{1}{n^3}$ using the union bound.
The same proof applies to out-degrees.
\end{proof}

The following is our analysis of \cref{alg:random}.
\begin{lemma}
\label{newmem}
\cref{alg:random} uses $O\left(\frac{n\log^2 n}{\epsilon^3}\right)$ memory with probability $1 - \frac{\log_{1+\eps} n}{n^4}$.
\end{lemma}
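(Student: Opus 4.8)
The plan is to bound the size of every edge set that \cref{alg:random} keeps in memory at any one time---these are $E_A$, the running set $E'$, and the working sample $H$---and then union-bound over the iterations of the while loop. By \cref{passes}, \VSetsUpdate is invoked at most $O(\log_{1+\epsilon} n)$ times before $S$ or $T$ becomes empty, so the loop performs that many ``peeling'' iterations; in addition there may be iterations taking the conditional branch with $|E_A(S,T)| < 2\xi$ (call it the \emph{if-branch}), but after the first such iteration $S$ and $T$ never change again and each further if-branch iteration merely drains $n\xi$ more stream edges, so it suffices to also control one if-branch iteration on top of the $O(\log_{1+\epsilon} n)$ peeling iterations.

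First I would dispose of the cheap pieces. By construction $|E_A| \le n\xi = O(n\log n/\epsilon^2)$. For $H = H_1 \cup H_2$ produced by \SSSample on \cref{line:streaming-H}: whenever $p \le 1$ we have $s \ge \frac{n\xi}{1-\epsilon}$, and since $s \ge |E'|$ (the last summand of \cref{line:approx-s}) and $|E_A(S,T)| \le n\xi \le s$, the current set satisfies $|E'| = O(s)$, so $\E{|H_1|} = p|E'| = O(n\xi)$ and $\E{|H_2|} \le ps = \frac{n\xi}{1-\epsilon}$; as $|E_H|$ is then a sum of independent $\{0,1\}$ variables, \cref{lemma:chernoff}~\eqref{item:delta-at-most-1-ge} gives $|E_H| = O(n\xi)$ with probability $1 - \frac{1}{n^4}$ in that iteration. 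The remaining case is dealt with by \cref{newhtog}, applied with the number of edges from $S$ to $T$ in place of a vertex degree and the uniform sample $E_A$ of the remaining stream in place of $H$: in every else-branch iteration $|E_A(S,T)| \ge 2\xi$, so part (i) shows that $\frac{|E_A(S,T)|}{|E_A|}\rb{|\Estream|+n\xi}$ is a $(1\pm\epsilon)$-factor estimate of $|E_G(S,T)| - |E'|$, whence $|E_G(S,T)| = O(s)$; in particular, in the sub-case $p > 1$, where $H \gets E' \cup \Estream(S,T) = E_G(S,T)$, the inequality $p > 1$ forces $s < \frac{n\xi}{1-\epsilon}$ and therefore $|E_H| = |E_G(S,T)| = O(n\xi)$ whp. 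Symmetrically, in an if-branch iteration $|E_A(S,T)| < 2\xi$, so part (ii) of \cref{newhtog} gives $|\Estream(S,T)| = O(|E|/n) = O(n)$ whp, so that iteration enlarges $E'$ by only $O(n)$.

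Next I would bound $|E'|$ by induction on the number of \VSetsUpdate calls made so far, the base case being $E' = \emptyset$. During one else-branch iteration, $E'$ is first enlarged by $E_A(S,T)$ (at most $n\xi$ edges) and then, after the \VSetsUpdate call, replaced by $(E' \cup E_H) \cap E_G(S,T)$, of size at most the previous $|E'| + |E_H| = |E'| + O(n\xi)$ whp by the preceding paragraph (in the $p > 1$ sub-case it is in fact reset to $O(n\xi)$). Thus $|E'|$ grows by $O(n\xi)$ per peeling iteration, so at every moment $|E'| = O\rb{n\xi \cdot \log_{1+\epsilon} n} = O\left(\frac{n\log^2 n}{\epsilon^3}\right)$ whp, the one-time $O(n)$ contribution of the if-branch being lower order. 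Summing the three sets, the memory is $|E_A| + |E'| + |E_H| = O\left(\frac{n\log^2 n}{\epsilon^3}\right)$. Finally, the bad events are the $O(1)$ Chernoff / negative-correlation concentration failures per iteration, each of probability at most $\frac{1}{n^4}$; a union bound over the $O(\log_{1+\epsilon} n)$ iterations gives total failure probability at most $\frac{\log_{1+\epsilon} n}{n^4}$, as claimed.

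I expect the main obstacle to be precisely the two corner cases above---the $p > 1$ step and the if-branch---where $H$ or $E'$ is set to (a subset of) the entire $E_G(S,T)$ rather than a $\Theta(n\xi)$-sized sample; handling them cleanly requires certifying that $|E_G(S,T)|$, respectively $|\Estream(S,T)|$, is itself only $O(n\xi)$, respectively $O(n)$, which is tantamount to showing that the estimator $s$ of \cref{line:approx-s} is accurate---the same negative-correlation concentration phenomenon underlying \cref{newgtoh,newhtog}. Everything else reduces to a direct Chernoff bound together with the $O(\log_{1+\epsilon} n)$ iteration count of \cref{passes}.
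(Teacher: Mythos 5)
Your proposal is correct and follows essentially the same route as the paper's proof: bound $|E_A|$, $|H|$, and the per-iteration growth of $E'$ by $O(n\xi)$ each (splitting into the $p\le 1$, $p>1$, and if-branch cases), then multiply by the $O(\log_{1+\epsilon}n)$ iteration count from \cref{passes} and union-bound the concentration failures. If anything you are slightly more careful than the paper on one point: for the accuracy of the estimator $s$ (the paper's inequality $s \le |E_G(S,T)| \le \frac{1+\epsilon}{1-\epsilon}s$) you correctly observe that $E_A$ is a fixed-size uniform sample from the remaining stream, so the plain Chernoff bound does not directly apply and one needs the negatively-correlated version underlying \cref{newhtog,chernoff}, whereas the paper cites \cref{lemma:chernoff} informally.
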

\begin{proof}
First, note that our algorithm takes $O(\log_{1+\epsilon}n)$ iterations.

For the graph $H$, we either have $p \leq 1$ or $p > 1$. If $p \leq 1$, we create a sample of edges with probability $\frac{n \xi}{(1-\epsilon) s}$ using \SSSample.
Note that $s$ is an approximation of $|E_G(S,T)|$ where, with probability $1 - \frac{1}{n^4}$, it holds
\begin{equation}\label{eq:s-bound}
s \leq |E_G(S,T)| \leq \frac{1+\epsilon}{1-\epsilon}\cdot s
\end{equation}
using the Chernoff bound, \cref{lemma:chernoff}~\eqref{item:delta-at-most-1}.
So, for each of the $O(\log_{1+\epsilon}n)$ iterations, graph $H$ will have $O\left(\frac{(1+\epsilon)^2n\log n}{\epsilon^2(1-\epsilon)^2}\right) = O\left(\frac{n\log n}{\epsilon^2}\right)$ edges with probability $1 - \frac{1}{n^4}$ and $\epsilon < 0.9$.
On the other hand, if $p > 1$, we must have that
\[(1-\epsilon)s < n\xi \implies |E_G(S,T)| < \frac{(1+\epsilon)n\xi}{(1-\epsilon)^2}\]
with probability $1 - \frac{1}{n^4}$ using \eqref{eq:s-bound}. Then, in this case, graph $H$ will also have $O\left(\frac{n\log n}{\epsilon^2}\right)$ edges.

Note that $E'$ only increases by at most $n\xi$ edges each iteration in addition to edges from graph $H$. Therefore, using the union bound, the number of edges in $E'$ at any point in time is at most $O\left(\frac{n\log n \cdot \log_{1+\epsilon} n}{\epsilon^2}\right) = O\left(\frac{n\log^2 n}{\epsilon^3}\right)$ with probability $1 - \frac{\log_{1+\epsilon}n}{n^4}$.

Now, $E_A$ is either a uniform random sample of $n\xi$ edges from the remaining edges in the stream, since the stream is randomized, or is the rest of the stream.
If it is the rest of the stream, then our algorithm reduces to finding the densest subgraph in $E'\cup E_A(S,T)$. This uses at most $O\left(\frac{n\log^2 n}{\epsilon^3}\right)$ memory. However, if $E_A$ is not the rest of the stream, then either $|E_{A}(S,T)| < 2\xi$ or $|E_{A}(S,T)| \geq 2\xi$.

If $|E_{A}(S,T)| < 2\xi$, then we add the edges from $E_A(S,T)$ and $\Estream(S,T)$ to $E'$ and find the densest subgraph in $E'$. Using \cref{newhtog}, at most $\frac{2(1+\epsilon)\xi}{p} \leq 2(1+\epsilon)n$ edges, where $p = \frac{n\xi}{|\Estream(S,T)\cup E_{A}(S,T)|}$, are added to $E'$ from $E_A$ and the stream. This uses $O\left(\frac{n\log^2 n}{\epsilon^3}\right)$ memory.

If $|E_{A}(S,T)| \geq 2\xi$, then we have that the number of edges from $E_G(S,T)$ that were in the stream is at least $\frac{s(1-\epsilon)|E_A(S,T)|}{|E_A|}$ using \cref{newhtog}, allowing us to calculate $s'$ for creating graph $H$.\\

Therefore, the total amount of memory used by the algorithm is $O\left(\frac{n\log^2 n}{\epsilon^3}\right)$.
\end{proof}

\begin{theorem}\label{theorem:semi-streaming}
\cref{alg:random} produces a $2\left(1 +\epsilon\right)$-approximation of the densest directed subgraph with probability $1 - \frac{\log_{1+\epsilon}n}{n^2}$.
\end{theorem}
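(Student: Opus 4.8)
The plan is to mirror the analysis of \cref{theorem:approx} for \cref{alg:first}, replacing every invocation of \cref{gtoh} by the appropriate substitute for \cref{alg:random}: \cref{newgtoh} for the graph $H$ constructed via \SSSample, and \cref{newhtog} for the stage where $E_A(S,T)$ is collected uniformly at random. The crucial observation, already recorded inside the proof of \cref{newgtoh}, is that the sampling distribution produced by \SSSample has the \emph{same expectation} for $\din_H(v)$ and $\dout_H(v)$ as the independent sampling in \cref{alg:first}: namely $\tfrac{n\xi}{(1-\epsilon)|E_G(S,T)|}\cdot\din_G(v)$, provided $s$ is a good estimate of $|E_G(S,T)|$. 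Hence, once we condition on the good event \eqref{eq:s-bound} (which holds with probability $1-1/n^4$ per iteration), the entire chain of inequalities \eqref{eq:second-bound}--\eqref{eq:E_G(i,T)-upper-bound}, and symmetrically \eqref{eq:E_G(S,j)-upper-bound}, goes through verbatim for the edges peeled in each "$H$-iteration'' of \cref{alg:random}, yielding $\sqrt{c}\,|E_G(i,T)|\le (1+\epsilon)^{O(1)}\rho(S,T)$ and $\tfrac1{\sqrt c}|E_G(S,j)|\le(1+\epsilon)^{O(1)}\rho(S,T)$ for the vertices removed in those iterations.

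Next I would handle the iterations in which \cref{alg:random} does \emph{not} call \VSetsUpdate, i.e. the branch where $|E_A(S,T)|<2\xi$ or $\Estream=\emptyset$. In that branch the algorithm simply collects all remaining edges of $E_G(S,T)$ into $E'$ and computes the exact densest subgraph of $E'$; by \cref{newhtog}(ii) the residual graph $E_G(S,T)$ then has only $O(n\xi)$ edges, so $E'$ is exactly $E_G(S,T)$ restricted to the current $(S,T)$, and the densest subgraph of $E'$ is the densest subgraph of $G[S,T]$, which is at least as dense as $\rho^*(G)$ up to the slack already lost to earlier peelings. Combined with the fact (\cref{passes}, and the iteration count argued in \cref{newmem}) that only $O(\log_{1+\epsilon} n)$ iterations occur, we then reproduce the global bound: defining $\doutrem(i)$ and $\dinrem(j)$ exactly as in \cref{theorem:approx} and letting $d_{out}^\ast,d_{in}^\ast$ be their maxima, the sets $\tS,\tT$ with $|\tS|/|\tT|=c$ maximizing $\rho$ satisfy
\[
\rho^*(G)=\frac{|E(\tS,\tT)|}{\sqrt{|\tS||\tT|}}\le\frac{|\tS|d_{out}^\ast+|\tT|d_{in}^\ast}{\sqrt{|\tS||\tT|}}=\sqrt{c}\,d_{out}^\ast+\frac1{\sqrt c}d_{in}^\ast\le 2(1+\epsilon)^{3}\rho(S^*,T^*),
\]
exactly as in \eqref{eq:fourth-bound}, and rescaling $\epsilon$ gives the claimed $2(1+\epsilon)$ factor.

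Finally I would collect the failure probabilities by a union bound. In each of the $O(\log_{1+\epsilon} n)$ iterations we invoke \cref{newgtoh} or \cref{newhtog} on all $n$ vertices (failure $\le 1/n^3$ total by their statements) and we use the estimate bound \eqref{eq:s-bound} (failure $\le 1/n^4$); summing over iterations gives overall failure $O(\log_{1+\epsilon} n)\cdot n\cdot n^{-3}=\tfrac{\log_{1+\epsilon}n}{n^2}$, matching the claimed success probability $1-\tfrac{\log_{1+\epsilon}n}{n^2}$. The main obstacle — and the place where this proof genuinely differs from that of \cref{theorem:approx} — is justifying that the negatively-correlated sampling of \SSSample is legitimately substitutable for independent sampling \emph{and} that the estimate $s$ computed on \cref{line:approx-s} is simultaneously (a) accurate enough to keep $\E{\din_H(v)}$ on target and (b) never so large that $p=\tfrac{n\xi}{(1-\epsilon)s}$ silently exceeds $1$ without the algorithm's $p>1$ safeguard catching it; part of this is already discharged in the proofs of \cref{newgtoh}, \cref{newhtog}, and \cref{newmem}, and the remaining work is to verify that the $p>1$ branch (which sets $H\gets E'\cup\Estream(S,T)$, i.e. uses the exact residual graph) only triggers when $|E_G(S,T)|=O(n\xi)$, in which case \VSetsUpdate is being run on the true degrees and the deterministic analysis of the base algorithm applies directly.
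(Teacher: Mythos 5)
Your proposal is correct and follows essentially the same route as the paper's proof, which is a terse one-paragraph argument: mirror the proof of \cref{theorem:approx}, substituting \cref{newgtoh} for \cref{gtoh} to handle the negatively-correlated sampling of \SSSample. Your write-up is in fact more careful than the paper's about the terminal branches ($|E_A(S,T)|<2\xi$, $\Estream=\emptyset$, and $p>1$), which the paper leaves implicit, but the underlying argument and union-bound bookkeeping are the same.
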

\begin{proof}
The pruning of vertices in our algorithm follows that of \cref{alg:first}, using \SSSample to create sample graph $H$ and then applying \VSetsUpdate. Therefore, following the proof of \cref{theorem:approx} but using \cref{newgtoh} instead of \cref{gtoh}, we attain the same approximation of the directed densest subgraph.
\end{proof}

\subsection{Removing the assumption on $\copt$ being known}
Executing \cref{alg:random} requires knowing $\copt$, the ratio between the two optimal sets. We remove this assumption by using standard techniques.
Namely, it is unclear how to learn $c$ without finding the densest subgraph.
To alleviate this, our algorithm guesses the value of $c$ and runs the algorithm described above for each guess.
Since there are $O(n^2)$ candidates for $c$, i.e., values $a/b$ for $a, b \in  \{1, \ldots, n\}$, our algorithm guesses the value of $c$ in the multiplicative increments of $\delta > 1$.
More precisely, we run our algorithm for values of $c$ ranging from $1/n$ to $n$ and of the form $\delta^i / n$.
As \cref{theorem:capprox} shows, this approach incurs an additional factor of $\sqrt{\delta}$ in our approximation. 
\begin{lemma}
\label{theorem:capprox}
The best density of \cref{alg:random} ran on values of $c = 1/n$ through $n$, multiplying by a constant factor of $\delta$, is a $2(1+\epsilon)\sqrt{\delta}$-approximation of the densest directed subgraph with probability $1-\frac{\log_{1+\epsilon}n}{n^2}$.
\end{lemma}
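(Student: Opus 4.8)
The plan is to bootstrap \cref{theorem:semi-streaming} (the fixed-$\copt$ analysis) and pay only a $\sqrt{\delta}$ factor for the geometric discretization of $c$. First I would note that for the true densest subgraph, realized by sets $(\mathcal S,\mathcal T)$ with $\rho(\mathcal S,\mathcal T)=\rho^*(G)$ and ratio $\copt\eqdef|\mathcal S|/|\mathcal T|$, we have $\copt\in[1/n,n]$ since $|\mathcal S|,|\mathcal T|\in\{1,\dots,n\}$. Hence among the guesses $\delta^i/n$ spanning $[1/n,n]$ there is one, call it $c'$, with $\copt\le c'\le\delta\copt$ (the smallest grid point that is $\ge\copt$). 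It then suffices to show that the single run of \cref{alg:random} with parameter $c'$ already outputs a $2(1+\epsilon)\sqrt{\delta}$-approximation with probability $1-\frac{\log_{1+\epsilon}n}{n^2}$: taking the densest output over all guesses can only be better, and since we invoke correctness of only that one run, the failure probability is not multiplied by the number of guesses.

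Next I would re-open the proof of \cref{theorem:approx}/\cref{theorem:semi-streaming} and isolate where the parameter passed to the algorithm actually enters. The per-vertex peeling bounds \eqref{eq:E_G(i,T)-upper-bound} and \eqref{eq:E_G(S,j)-upper-bound} --- i.e.\ $\sqrt{c'}\,|E_G(i,T)|\le(1+\epsilon)^3\rho(S,T)$ for every $i$ peeled from $S$, and $\tfrac{1}{\sqrt{c'}}\,|E_G(S,j)|\le(1+\epsilon)^3\rho(S,T)$ for every $j$ peeled from $T$ --- are derived only from \cref{newgtoh}, a Chernoff bound on $|E_H(S,T)|$, and the branch condition ($|S|/|T|\ge c'$, resp.\ $<c'$); none of this uses any relation between $c'$ and $\copt$. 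So, writing $d_{out}^*,d_{in}^*$ for the largest ``remaining'' out-/in-degrees during this run and $(S^*,T^*)$ for its output (which keeps the densest pair seen), I still obtain $\sqrt{c'}\,d_{out}^*\le(1+\epsilon)^3\rho(S^*,T^*)$ and $\tfrac{1}{\sqrt{c'}}\,d_{in}^*\le(1+\epsilon)^3\rho(S^*,T^*)$. The only place \cref{theorem:approx} genuinely used ``$c=\copt$'' is the final comparison against $\rho^*(G)$, and that is the step I would redo.

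For that step I would run the same edge-charging argument as in \cref{theorem:approx}, now with $(\mathcal S,\mathcal T)$ playing the role of $(\tS,\tT)$: charging each edge of $E_G(\mathcal S,\mathcal T)$ to whichever endpoint is peeled first gives $|E_G(\mathcal S,\mathcal T)|\le|\mathcal S|\,d_{out}^*+|\mathcal T|\,d_{in}^*$, hence
\[
\rho^*(G)\;\le\;\sqrt{\copt}\,d_{out}^*+\tfrac{1}{\sqrt{\copt}}\,d_{in}^*\;=\;\tfrac{\sqrt{\copt}}{\sqrt{c'}}\cdot\sqrt{c'}\,d_{out}^*\;+\;\tfrac{\sqrt{c'}}{\sqrt{\copt}}\cdot\tfrac{1}{\sqrt{c'}}\,d_{in}^*.
\]
Since $\copt\le c'\le\delta\copt$ we have $\sqrt{\copt}/\sqrt{c'}\le1$ and $\sqrt{c'}/\sqrt{\copt}\le\sqrt{\delta}$, so $\rho^*(G)\le(1+\sqrt{\delta})(1+\epsilon)^3\rho(S^*,T^*)\le2\sqrt{\delta}(1+\epsilon)^3\rho(S^*,T^*)$, using $\delta>1$. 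A standard rescaling of $\epsilon$ by a constant turns $(1+\epsilon)^3$ into $(1+\epsilon)$, giving the claimed $2(1+\epsilon)\sqrt{\delta}$-approximation; the probability bound is inherited verbatim from \cref{theorem:semi-streaming} applied to the run with parameter $c'$.

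The one point to get right is the bookkeeping in the displayed inequality: the discretization cost is asymmetric, since only one of the two terms $\sqrt{\copt}\,d_{out}^*$ and $\tfrac{1}{\sqrt{\copt}}\,d_{in}^*$ absorbs the $\sqrt{\delta}$ factor. One therefore chooses $c'$ on the side of $\copt$ that shrinks the other term (as above), or simply observes that the map $c\mapsto1/c$, $S\leftrightarrow T$ is a symmetry of the whole setup, so the complementary choice $\copt/\delta\le c'\le\copt$ yields the same $1+\sqrt{\delta}$ factor. Beyond that, the argument is a line-by-line transcription of the proof of \cref{theorem:semi-streaming}, so I do not anticipate a genuine obstacle.
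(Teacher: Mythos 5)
Your proof is correct and follows essentially the same route as the paper: isolate the guess-independent peeling bounds $\sqrt{c}\,d_{out}^*\le(1+\epsilon)^3\rho(S^*,T^*)$ and $\tfrac{1}{\sqrt{c}}\,d_{in}^*\le(1+\epsilon)^3\rho(S^*,T^*)$, plug into the edge-charging inequality $\rho^*(G)\le\sqrt{\copt}\,d_{out}^*+\tfrac{1}{\sqrt{\copt}}\,d_{in}^*$, and pay $\sqrt{\delta}$ to swap $\copt$ for the nearest grid point. The only cosmetic difference is that you take a one-sided bracket $\copt\le c'\le\delta\copt$ and get $(1+\sqrt{\delta})(1+\epsilon)^3$ before relaxing to $2\sqrt{\delta}(1+\epsilon)^3$, whereas the paper uses the two-sided bracket $\copt/\delta\le c\le\delta\copt$ and pulls $\sqrt{\delta}$ out of both terms symmetrically; you also make explicit (correctly) that only the single run with parameter $c'$ needs to succeed, so the failure probability does not accumulate over guesses.
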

\begin{proof}
Consider $c$ where $\copt/\delta \leq c \leq \delta\copt$. Drawing from the proof of \cref{theorem:approx}, consider $\tS,\tT$ with $|\tS|/|\tT| = \copt$ that maximizes $\rho(\tS,\tT) = \rho^*(G)$. Then,
\begin{eqnarray}
\rho^*(G) &=& \frac{|E(\tS,\tT)|}{\sqrt{|\tS||\tT|}} \leq \frac{|\tS|\cdot d_{out}^* + |\tT|\cdot d_{in}^*}{\sqrt{|\tS||\tT|}}\nonumber\\
&=& \sqrt{\copt}\cdot d_{out}^* + \frac{1}{\sqrt{\copt}}\cdot d_{in}^*\nonumber\\
&\leq& \sqrt{\delta c}\cdot d_{out}^* + \sqrt{\delta/c}\cdot d_{in}^*\nonumber\\
&\leq& 2\left(1 +\epsilon\right)^3\sqrt{\delta}\cdot\rho(S^*,T^*)\nonumber
\end{eqnarray}
Therefore, this gives a $2(1+\epsilon)\sqrt{\delta}$-approximation of the densest directed subgraph.
\end{proof}
\section{MPC Algorithms}
In this section, we extend our approach developed in \cref{sec:single-pass} to two memory regimes of the MPC model.
We present the ideas gradually. First, we show that slightly adjusting \cref{alg:random} yields an $O(1)$ MPC round algorithm in the super-linear memory regime.
To achieve this round complexity, we prove a certain edge-size progress that was not required for the semi-streaming analysis.
Second, in \cref{sec:near-linear-MPC} we show that, by further building on the MPC super-linear memory algorithm, it is possible to find an approximate directed densest subgraph in only $O(\sqrt{\log n})$ rounds in the near-linear memory regime.

\subsection{Super-linear Memory Regime}
\label{sec:MPC-superlinear}
Recall that in \cref{alg:random}, $\Estream$ is the \emph{remaining} edges in the stream.
In developing a super-linear regime MPC algorithm, our main idea is to implement $\Estream$ efficiently in MPC. 
More concretely, assume that a machine $\cM$ aims to simulate \cref{alg:random}. Then, $\cM$ first samples $n^{1+\delta}$ edges from the graph -- let that set be $X_1$ -- and presents $X_1$ to \cref{alg:random} as a prefix of $\Estream$. 
Once \cref{alg:random} makes a pass over $X_1$, $\cM$ needs to fetch another set $X_2$ of $n^{1+\delta}$ edges from $E \setminus X_1$ and continue \cref{alg:random} with $X_2$. 
This process continues until there are edges to feed to \cref{alg:random}; the samples $X_1, X_2, \ldots$ represent $\Estream$.
Clearly, this simulation corresponds to \cref{alg:random} executed on a stream. Unfortunately, fetching sets $X_i$ can happen $m / n^{1+\delta}$ times, which for $m \gg n^{1+\delta}$ is highly inefficient.

Our main idea is that after processing $X_1$, it suffices if $\cM$ takes a random sample from $E_G(S, T) \cap (\Estream \setminus X_1)$ as opposed from $\Estream \setminus X_1$.
The main question here is: ``\emph{How does this translate to round complexity?}'' As the primary measure of progress, we show the following.
Let $(S', T')$ be the vertex set pair in \cref{alg:random} just before our simulation of \cref{alg:random} exhausted $X_1$.
Then, we show that, up to low-order terms, $|E_G(S', T')| < m / n^{\delta}$. More generally, we show that whenever our simulation needs to fetch fresh $X_i$ from the remaining unused edges, the number of \emph{relevant} remaining edges reduces by a factor of $n^{\delta}$.
Hence, after $O(1/\delta)$ many such steps, the number of remaining relevant edges fits in the memory of a single machine, at which point the rest of the problem can be solved locally.
We dive into the details of this argument in our proof of the following claim.




\begin{theorem}
\label{theorem:MPC-superlinear}
There exists an algorithm that
runs in $O(1/\delta)$ rounds with probability $1 - \frac{\log_{1+\epsilon}n}{n^4}$.
\end{theorem}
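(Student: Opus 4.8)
The plan is to simulate \cref{alg:random} in MPC, where the role of $\Estream$ is played by successive batches of sampled edges, and to argue that only $O(1/\delta)$ batches are ever needed. First I would set up the simulation: a coordinator machine $\cM$ holds the current $(S,T)$, $(S^*,T^*)$, and $E'$ — all of size $O(n \poly(\log n)/\eps^3)$ by \cref{newmem} — and a batch $X_i$ of $n^{1+\delta}$ edges sampled uniformly at random from $E_G(S,T) \cap \Erem$, where $\Erem$ denotes the edges not yet used by any previous batch. Sampling such a batch costs $O(1)$ rounds in the super-linear regime: each of the $O(m/n^{1+\delta})$ machines tells $\cM$ how many of its stored edges lie in $E_G(S,T) \cap \Erem$, $\cM$ computes the total $M_i \eqdef |E_G(S,T)\cap \Erem|$ and allocates the $n^{1+\delta}$ sample slots proportionally, and the machines ship the selected edges. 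Feeding $X_i$ to \cref{alg:random} as a prefix of $\Estream$, the algorithm runs $O(\log_{1+\eps} n)$ internal iterations, each of which either finishes $X_i$ or shrinks one of $S,T$; crucially, the correctness argument is untouched because $X_i$ is still a uniform sample of $E_G(S,T)\cap\Erem$, which is exactly what \cref{newgtoh,newhtog} require.

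The heart of the proof is the progress bound: when batch $X_i$ is exhausted — i.e.\ \cref{alg:random} has consumed all $n^{1+\delta}$ edges of $X_i$ without ever hitting the "$|E_A(S,T)| < 2\xi$" termination branch — the number of relevant remaining edges has dropped by a factor of roughly $n^\delta$. Let $(S',T')$ be the vertex pair at that moment. Each time \cref{alg:random} reads its next $n\xi$ edges $E_A$ from the (simulated) stream and proceeds past the "$|E_A(S,T)| < 2\xi$" check, it must be that $|E_A(S,T)| \ge 2\xi$; since $E_A$ is a uniform sample of size $n\xi$ from the still-unconsumed part of $X_i$, which is itself a uniform sample from $E_G(S,T)\cap\Erem$, a Chernoff argument (via \cref{chernoff}, as in \cref{newhtog}) shows that as long as the current relevant edge count exceeds about $m_i / n^{\delta}$ — where $m_i$ is the relevant count when $X_i$ was drawn — each read of $n\xi$ edges does see $\ge 2\xi$ of them with high probability, so the simulation keeps going; conversely, once the count falls to $O(m_i/n^\delta)$ the relevant edges all fit in $E'$ (size $O(n\poly(\log n)/\eps^3) \le m_i/n^\delta$ once $m_i$ is large), $\cM$ can pull them all in and the "$|E_A(S,T)|<2\xi$" branch fires, which is the stopping condition for the batch. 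I would phrase this as: conditioned on the high-probability events of \cref{gtoh}-type lemmas, $|E_G(S',T')| \le m_i / n^{\delta}$ up to lower-order terms, i.e.\ $m_{i+1} \le m_i/n^\delta + O(n\poly(\log n)/\eps^3)$.

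Given the progress bound, the round count follows: starting from $m_0 = m \le n^2$, after $t$ batches $m_t \le n^{2 - t\delta} + O(n\poly(\log n)/\eps^3)$, so after $t = O(1/\delta)$ batches the relevant edge count is $O(n\poly(\log n)/\eps^3)$, at which point $\cM$ loads all of $E_G(S,T)$ into its memory and runs the peeling to completion locally in $0$ further rounds. Each batch costs $O(1)$ rounds (the proportional sampling plus shipping), and the final local phase is free, for a total of $O(1/\delta)$ rounds; the failure probability is a union bound over the $O(1/\delta)$ batches and the $O(\log_{1+\eps} n)$ internal iterations each, each contributing the $1/n^4$-type slack from \cref{newmem,newgtoh,newhtog}, giving overall $1 - \frac{\log_{1+\eps} n}{n^4}$. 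The correctness — that the returned $(S^*,T^*)$ is a $2(1+\eps)$-approximation — is inherited verbatim from \cref{theorem:semi-streaming}, since the simulation is faithful to \cref{alg:random} run on a randomized stream.

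The main obstacle I anticipate is making the progress step fully rigorous: one has to carefully track that the relevant edge count is measured against the \emph{right} reference population (the un-consumed suffix of the current batch, which is a uniform sub-sample of $E_G(S,T)\cap\Erem$, which in turn is uniform in the remaining graph), and that the sequence of "$\ge 2\xi$" events along a single batch — there are $O(\log_{1+\eps} n)$ of them — can be union-bounded without the conditioning degrading the negative-correlation structure needed for \cref{chernoff}. The bookkeeping of "up to low-order terms" (the additive $O(n\poly(\log n)/\eps^3)$ from $E'$ and from the $\pm\eps$ slack in the estimate $s$ on \cref{line:approx-s}) also needs to be handled so that it is genuinely absorbed once $n^\delta \gg \poly(\log n)/\eps^3$, i.e.\ for $n$ larger than a constant depending on $\delta$ and $\eps$.
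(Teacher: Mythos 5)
Your overall strategy --- simulating \cref{alg:random} on a coordinator machine, fetching batches of $n^{1+\delta}$ edges, and showing that each batch reduces the relevant edge count by roughly $n^\delta$ --- matches the paper's, as does the sampling implementation and the round/probability accounting. The gap is in the progress step. You claim the batch stops when the ``$|E_A(S,T)| < 2\xi$'' branch fires, that this happens once the relevant count drops to $O(m_i/n^\delta)$, and that at that point everything fits in $E'$. None of the three holds: the ``$|E_A(S,T)| < 2\xi$'' check on an $n\xi$-edge prefix is triggered only when the relevant-edge fraction falls to about $2/n$, which for any constant $\delta < 1$ is far below $n^{-\delta}$, so the check does not fire at the $m_i/n^\delta$ threshold; a batch is exhausted simply because its $n^{1+\delta}$ edges run out, which is a different event from that branch firing; and your parenthetical inequality ``$O(n\poly(\log n)/\eps^3) \le m_i/n^\delta$'' is backward --- for the relevant edges to fit you need $m_i/n^\delta$ to be \emph{at most} the machine's capacity, not at least.

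The missing ingredient is a per-iteration accounting. Let $k$ be the number of batch edges consumed during one peeling step, for the current pair $(S',T')$. The paper argues via a dichotomy: either some iteration in this phase has $k \geq n^{1+\delta}/(2\log_{1+\eps}n)$, in which case --- since that iteration only needs $O(n\xi)$ \emph{relevant} edges from the stream --- the relevant-edge density at that moment must already be $O(\poly(\log n /\eps)\cdot n^{-\delta})$, so $|E_G(S',T')| = O(|\Erem|/n^\delta)$ up to the additive $|E'|$ term; or every iteration has $k < n^{1+\delta}/(2\log_{1+\eps}n)$, and since by \cref{passes} there are only $O(\log_{1+\eps}n)$ iterations in total, the batch never exhausts and the entire algorithm terminates within this phase on $\cM$. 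Your sketch gestures at the first branch, but without this dichotomy there is no mechanism forcing ``batch exhausted'' to imply ``density dropped by $n^\delta$,'' which is the crux of the $O(1/\delta)$ bound.
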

\begin{proof}
We first describe the algorithm and then an analysis of its round complexity.

\paragraph{Algorithm description.}
Let $\cAMPC$ be our MPC algorithm.
\begin{itemize}
    \item $\cAMPC$ maintains the set of \emph{relevant} edges, denoted by $\Erel$. Initially, $\Erel = E_G$.
    \item This algorithm simulates an instance of \cref{alg:random}, and only that instance. Let the instance be simulated on a machine $\cM$; the simulation proceeds in phases as follows:
    \begin{itemize}
        \item Let $(S_{i+1}, T_{i+1})$ be the vertex sets of \cref{alg:random} at the end of phase $i$ of $\cAMPC$. 
        \item At the beginning of phase $i + 1$, $\cAMPC$ updates $\Erel \gets \Erel \cap E_G(S_{i+1}, T_{i+1})$. 
        \item Then, $\cAMPC$ places on $\cM$ a set $X_{i+1}$ of $n^{1+ \delta}$ edges chosen uniformly at random from $\Erel$ and updates $\Erel \gets \Erel \setminus X_{i + 1}$.
        \item The rest of phase $i + 1$ consists of continuing the \cref{alg:random} instance with the set $X_{i+1}$.
    \end{itemize}
    \item $\cAMPC$ stops when $\Erel$ fits entirely on $\cM$.
\end{itemize}

All the steps of $\cAMPC$, except updating $\Erel$ and sampling $X_{i+1}$, are done on a single machine. 
Removing $X_i$ from $\Erel$ can be done by first sorting $\Erel$ and $X_i$ together and then removing that edge that has a duplicate in the sorted list. Sorting can be done in $O(1)$ rounds~\cite{goodrich2011sorting}.
Performing $\Erel \cap E_G(S_{i+1}, T_{i+1})$ can be done again by sorting $\Erel$, $S_{i+1}$ and $T_{i+1}$ altogether. Those edges incident to $S_{i + 1}$ or $T_{i + 1}$ are the only ones kept in $\Erel$.
To sample $X_{i+1}$, first, each edge in $\Erel$ samples a random integer in the range $[1, n^8]$. Whp, all the integers sampled by the edges are distinct. Second, all the edges are sorted with respect to the sampled integers; effectively, this creates a random permutation of $\Erel$. Third, the first $n^{1+\delta}$ edges in that sorted list are $X_{i+1}.$

\paragraph{Algorithm analysis.}
Consider a phase $i$ of $\cAMPC$ and let $(S_i, T_i)$ be our vertex sets before the pruning starts.
Now consider some pair of vertex sets $(S', T')$ while running \cref{alg:random} during phase $i$.
We define $\Erem$ as the union of $\Erel$ and the remaining edges in $X_i$ that the algorithm has not inspected so far.
There are $|E_G(S',T')| - |E'|$ edges in $E_G(S', T')$ remaining in $\Erem$. 

\cref{alg:random} needs at most $\frac{2n\xi}{1-\epsilon}$ edges in $E_G(S', T')$ from the stream in order to start peeling.
Let $k$ be the number of edges taken from the stream to perform one iteration of peeling. Therefore, using the Chernoff bound, \cref{lemma:chernoff}~\eqref{item:delta-at-most-1}, we see that
\[k \leq \frac{2n\xi |\Erem|}{(1-\epsilon)^2(|E_G(S',T')| - |E'|)}. \]
with probability $1 - \frac{1}{n^4}$, assuming that $|E_G(S',T')| - |E'|$ is at least $\Omega\left(n\log^2 n/\epsilon^2\right)$. If $|E_G(S',T')| - |E'|$ is $O\left(n\log^2 n/\epsilon^2\right)$, then all the edges in $\Erel$ for the next phase fit on $\cM$.

Now, if $k \geq \frac{n^{1+\delta}}{2\log_{1+\epsilon}n}$, then we have that
\begin{eqnarray*}
\frac{n^{1+\delta}}{2\log_{1+\epsilon}n} &\leq&  \frac{2n\xi |\Erem|}{(1-\epsilon)^2(|E_G(S',T')| - |E'|)}\\
\implies |E_G(S',T')| &\leq& \frac{4\xi\log_{1+\epsilon}n}{(1-\epsilon)^2 n^{\delta}}\cdot |\Erem| + |E'|.
\end{eqnarray*}
Recall from \cref{newmem} that $E'$ contains at most $O\left(\frac{n'\log^2 n}{\epsilon^3}\right)$ edges.
So, if $|\Erem| \leq |E'|$, then in the next phase $\Erel$ fits entirely on $\cM$.
On the other hand, if $|\Erem| > |E'|$, then whp $E_G(S',T')$ has $O\left(\frac{\log^2 n}{\epsilon^3 n^{\delta}}\cdot |\Erem| \right)$ edges for $\epsilon < 0.9$.
Therefore, since $\Erem$ is a subset of $E_G(S_i,T_i)$, this phase causes the number of edges between vertex sets $S$ and $T$ to reduce by a factor of $\Omega(n^{\delta})$.
Then, $\cAMPC$ takes $O(1/\delta)$ rounds if every phase has at least one pair of vertex sets with $k \geq \frac{n^{1+\delta}}{2\log_{1+\epsilon}n}$.

Consider now that $k < \frac{n^{1+\delta}}{2\log_{1+\epsilon}n}$ for all pairs of vertex sets in a phase. Observe that \cref{alg:random} takes at most $2\log_{1+\epsilon} n$ iterations, meaning that the entire algorithm will be finished this phase entirely on $\cM$.

Combining these two cases with respect to $k$, $\cAMPC$ takes $O(1/\delta)$ rounds.
\end{proof}

\subsection{Near-linear Memory Regime}
\label{sec:near-linear-MPC}
In this section, we improve the efficiency of our MPC algorithm for the super-linear memory regime, enabling us to obtain a quadratically faster algorithm for the near-linear memory regime than currently known~\cite{bahmani2012densest,bahmani2014efficient}. Our approach stems from two ideas.
The first idea is an observation: instead of sampling $n\xi$ edges, our algorithm requires sampling only $(|S| + |T|)\cdot \xi$ edges. 
Hence, once $|S| + |T|$ become sufficiently small, then the memory per machine becomes significantly larger than our sample sizes, and we can use ideas similar to those from the proof of \cref{theorem:MPC-superlinear} to improve the $O(\log n)$ round complexity. 
Unfortunately, this alone is not sufficient. To see that, consider an example in which $|S| / |T| = \Theta(n)$, e.g., the densest subgraph has a star-like structure.
In that case, it is potentially needed to peel $T$ for $O(\log n)$ times.
Since $|S| + |T| = \Theta(n)$ during those peeling steps, our observation does not yield any improvement.

The second idea is that as long as our algorithm peels only $T$ or only $S$, no fresh sampling is required. To elaborate, assume that the peeling process produces sets $(S, T_1), (S, T_2), \ldots, (S, T_j)$. 
Since $S$ remains the same, the estimated degree between $S$ and a vertex $v \in T_j$ is the same in all the sets $T_1, T_2, \ldots, T_j$. Hence, once $|E_G(S, v)|$ is estimated, no new sample is needed until $S$ changes. In other words, the star-like example we pointed to above can be handled with a single sample. \emph{Nevertheless, what does happen when $S$ changes?} In that case, the algorithm might need a new round to obtain a fresh sample. 
Fortunately, the size of $S$ reduces by at least a factor of $(1+\eps)$, increasing the gap between the memory per machine and $|S| + |T|$, which our algorithm utilizes in a way similar to described in \cref{sec:MPC-superlinear}. We formalize these ideas in the proof of the following claim.

\begin{theorem}
\label{theorem:MPC-nearlylinear}
There exists an algorithm that
runs in $O(\sqrt{\log_{1+\epsilon} n})$ rounds with probability $1 - \frac{\log_{1+\epsilon}n}{n^4}$.
\end{theorem}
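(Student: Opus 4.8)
The plan is to combine the two ideas sketched in the prose immediately preceding the statement, and run a ``potential-function'' argument where the potential is something like $\Phi = (\text{number of rounds used so far}) - (\text{progress made so far})$, with progress measured both in the shrinking of $|E_G(S,T)|$ relative to machine memory and in the shrinking of $\min\{|S|,|T|\}$. First I would set up the simulation exactly as in \cref{theorem:MPC-superlinear}: a single machine $\cM$ simulates \cref{alg:random}, but now it fetches only $(|S|+|T|)\cdot \xi$ edges at a time (this is valid by \cref{newgtoh} applied with the corrected, smaller sample size, since estimating $|E(u,T')|$ for $u \in S'$ only requires roughly $|S'|\xi$ samples, and similarly for $T'$) rather than $n\xi$ edges, and it keeps the relevant-edge set $\Erel$ updated by intersecting with $E_G(S_i,T_i)$ and removing fetched edges, exactly as before. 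All the MPC primitives (sorting, sampling a random permutation via random keys in $[1,n^8]$, intersection via sorting) are $O(1)$ rounds by \cite{goodrich2011sorting}, so the only thing to bound is the number of phases.

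The second step is the phase-counting argument itself, split according to which of the three things happens in a phase. Following the super-linear proof, define $k$ as the number of stream edges consumed to perform one peeling iteration; if in some phase there is a peeling step with $k \ge n^{1+\delta'}/(2\log_{1+\eps}n)$ for an appropriate $\delta'$ — here I would not have a free constant $\delta$, so I would instead track the ratio between the current sample budget $(|S|+|T|)\xi$ and the memory $S_{\text{mem}} = n\poly(\log n/\eps)$, call it $n/(|S|+|T|)$ up to polylog factors — then by the same Chernoff computation $|E_G(S',T')|$ drops by that same ratio relative to $|\Erem|$, i.e.\ the relevant edge count shrinks by a factor $\widetilde\Omega\big(n/(|S|+|T|)\big)$. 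The key new ingredient, the second idea from the prose, is that while the algorithm peels only $S$ (or only $T$), no new sample is needed: the estimated $S$-to-$v$ degrees persist across $T_1 \supseteq T_2 \supseteq \cdots$, so a whole run of same-side peels costs only the rounds to fetch one sample. Thus a phase ends (needs a fresh round) only when the ``active'' side switches, and each switch is preceded by a genuine $(1+\eps)$-factor shrink of the side that was just being peeled.

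The third step is to argue this gives $O(\sqrt{\log_{1+\eps}n})$ rounds. Here is where the square root should come from: let $a = \log_{1+\eps}|S|$ and $b = \log_{1+\eps}|T|$ at the current moment; these start at $\log_{1+\eps}n$ and the algorithm terminates when one of them hits $0$. Each phase either (i) makes the relevant-edge count shrink by a factor $\widetilde\Omega(n/(|S|+|T|)) = \widetilde\Omega\big((1+\eps)^{\log_{1+\eps}n - \max(a,b)}\big)$, which is a ``big'' shrink when $\max(a,b)$ is small, or (ii) drives down $\min(a,b)$ toward termination. Balancing: the number of phases of type (i) needed to exhaust the $\Theta(\log_{1+\eps}n)$ ``bits'' of edge-count budget is $O\big(\log_{1+\eps}n / (\log_{1+\eps}n - \max(a,b))\big)$ summed appropriately, while type-(ii) phases number $O(\min(a,b)) \le O(\max(a,b))$; setting the threshold for ``$\max(a,b)$ small'' at $\log_{1+\eps}n - \sqrt{\log_{1+\eps}n}$ makes both contributions $O(\sqrt{\log_{1+\eps}n})$. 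Concretely: while $|S|+|T| \ge n / (1+\eps)^{\sqrt{\log_{1+\eps}n}}$, each non-trivial phase shrinks the relevant-edge count by a factor $\ge (1+\eps)^{\sqrt{\log_{1+\eps}n}}$, so there are $O(\sqrt{\log_{1+\eps}n})$ such phases; once $|S|+|T| < n/(1+\eps)^{\sqrt{\log_{1+\eps}n}}$, a na\"ive bound of one round per peel costs only the remaining $O(\log_{1+\eps}n)$ peels, but amortized against same-side runs and the $(1+\eps)$-shrink-per-switch it is again $O(\sqrt{\log_{1+\eps}n})$ switches — and when $|S|+|T|$ times $\xi$ fits in memory we finish locally.

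The main obstacle I anticipate is making the two regimes interact cleanly in a single monotone potential: the edge-count shrink from the super-linear-style argument is most effective exactly when $|S|+|T|$ is small, whereas the same-side-peeling idea is what keeps the round count down precisely when $|S|+|T|$ stays large (the star example). Pinning down the crossover at $\sqrt{\log n}$ rigorously — i.e.\ choosing the threshold and verifying that neither the number of ``active-side switches'' before the crossover nor the number of ``edge-count halvings'' after it exceeds $O(\sqrt{\log_{1+\eps}n})$, and that fetching a fresh sample after each switch really is only $O(1)$ rounds — is the delicate part; the Chernoff estimates themselves are routine, being essentially those already used in \cref{newmem} and \cref{theorem:MPC-superlinear} with $n$ replaced by $|S|+|T|$.
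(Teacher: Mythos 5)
Your overall strategy --- simulate \cref{alg:random} with sample size $(|S|+|T|)\cdot\xi$, keep $\Erel$ current, exploit that same-side peeling can reuse degree estimates, and convert the $n/(|S|+|T|)$ shrink factor into an $O(\sqrt{\log_{1+\eps}n})$ round bound --- is in the spirit of the paper's proof, but the balancing step as stated is inverted and leaves a genuine gap. You write that ``while $|S|+|T| \ge n / (1+\eps)^{\sqrt{\log_{1+\eps}n}}$, each non-trivial phase shrinks the relevant-edge count by a factor $\ge (1+\eps)^{\sqrt{\log_{1+\eps}n}}$,'' but the per-phase shrink factor is only $\tilde{\Omega}\bigl(n/(|S|+|T|)\bigr)$, which in this regime is \emph{at most} of order $(1+\eps)^{\sqrt{\log_{1+\eps}n}}$ and degrades to $\tilde{\Omega}(1)$ when $|S|+|T|=\Theta(n)$. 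So the large-$|S|+|T|$ regime gets essentially no edge-count progress per phase from that bound. Meanwhile the number of ``type-(ii)'' phases (active-side switches) is bounded only by $O(\log_{1+\eps}n)$ in general: when $|S|\approx|T|$ and $c\approx 1$, the greedy can alternate a single peel of $S$ with a single peel of $T$ for $\Theta(\log_{1+\eps}n)$ switches while $|S|+|T|$ remains $\Theta(n)$, and your accounting does not cap those rounds.

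What you are missing is the paper's device for \emph{forcing} $n'=|S|+|T|$ to drop by a factor $1+\eps$ in every phase. Each phase of the paper's algorithm first spends one MPC round computing the exact degrees $|E_G(u,T_i)|$ and $|E_G(S_i,v)|$ from $\Erel$, then peels the currently dominant side locally (no fresh samples) until the ratio $|S|/|T|$ crosses $c$, and only then runs the super-linear-style sampled peeling, whose first iteration necessarily acts on the \emph{other} side. Thus both $|S|$ and $|T|$ shrink by at least $1+\eps$ in every phase, giving $n'\le n/(1+\eps)^i$ after $i$ phases, per-phase edge-shrink factor $x_i=\Omega\bigl((1+\eps)^i\bigr)$, and cumulative shrink $\prod_{i\le k}x_i=(1+\eps)^{\Theta(k^2)}$, which exhausts the $O(\log_{1+\eps}n)$ edge-count budget at $k=O(\sqrt{\log_{1+\eps}n})$ with no threshold case-split at all. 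Your potential-function framing could be repaired only by building in this ``both sides shrink every round'' invariant --- merely observing that same-side runs are free does not give it to you --- and without it the alternating-peel example above defeats the $\sqrt{\log_{1+\eps}n}$ bound.
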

\begin{proof}
Let $\cAMPC$ be the algorithm described in the proof of \cref{theorem:MPC-superlinear}. We extend $\cAMPC$ this algorithm to obtain the desired round complexity. In this proof, we only describe the extension.
\paragraph{Algorithm description.}
In a phase $i$, we calculate $E_G(u,T_i)$ and $E_G(S_i,v)$ for all $u\in S_i, v\in T_i$.
Then, if $|S_i|/|T_i| \geq c$, we peel from set $T$ using \VSetsUpdate until the resulting sets $(S_i', T_i')$ satisfy $|S_i'|/|T_i'| < c$.
Similarly, if $|S_i|/|T_i| < c$, we peel from set $S$ using \VSetsUpdate until the resulting sets $(S_i', T_i')$ satisfy $|S_i'|/|T_i'| \geq c$.
Finally, we do what $\cAMPC$ does every phase except that the edge samples now have size $(|S| + |T|)\cdot \xi$.

\paragraph{Algorithm analysis.}
Consider a phase $i$ of our algorithm and let $(S_i, T_i)$ be our vertex sets before the peeling starts.
Without loss of generality, assume that $|S_i|/|T_i| \geq c$.
Note that calculating the degrees of vertices and peeling from set $T$ can all be done in a constant number of rounds.
Afterwards, we have that the resulting sets $(S_i', T_i')$ satisfy $|S_i'|/|T_i'| < c$. 
This guarantees that both vertex sets are peeled from.

Now, let $n' = |S_i'| + |T_i'|$. We consider some pair of vertex sets $(S', T')$ while running \cref{alg:random} during phase $i$. With samples of $n'\xi$ edges, we follow a similar proof of \cref{theorem:MPC-superlinear} to see that if $k \geq \frac{n \poly(\log(n) / \eps)}{2\log_{1+\epsilon}n}$, either $\Erel$ fits entirely on $\cM$ in the next phase or $E_G(S', T')$ has $O\left(\frac{n'\log^2 n}{\epsilon^3 n \poly(\log(n) / \eps)}\cdot |\Erem| \right)$ edges. With the latter, the number of edges between vertex sets $S$ and $T$ reduces by a factor of $x = \Omega\left(\frac{\epsilon^3 n \poly \log n}{n'\log^2 n}\right)$.
Since both vertex sets are peeled from, $n'$ reduces by at least a factor of $1+\epsilon$ every phase. Then, the algorithm takes $O(\sqrt{\log_{1+\epsilon}n})$ rounds if every phase has at least one pair of vertex sets with $k \geq \frac{n \poly(\log(n) / \eps)}{2\log_{1+\epsilon}n}$ because $x$ reduces by at least a factor of $1+\epsilon$ every phase as well due to $n'$.

With $k < \frac{n \poly(\log(n) / \eps)}{2\log_{1+\epsilon}n}$ for all pair of vertex sets in a phase, the entire algorithm can be finished in this phase entirely on $\cM$. So, the algorithm takes $O(\sqrt{\log_{1+\epsilon} n})$ rounds.
\end{proof}
\section{Experiments} \label{sec:experiment}
We now demonstrate the practical performance of \cref{alg:random}, comparing it to Algorithm $3$ from \cite{bahmani2012densest}.
\vspace{-5pt}
\subsection{Data}
We use $4$ data sets from the Stanford Large Network Dataset Collection (\cite{snapnets}): Slashdot, Berkeley-Stanford Web Graph, LiveJournal, and Twitter.
In addition, we generate two synthetic graphs, PrefAttach $1$ and $2$, using the preferential attachment scheme \cite{simon1955class}.
This scheme is popular in the scientific analysis of graphs as it can generate the power-law distribution~\cite{barabasi1999emergence,jacob2015spatial,wan2017fitting} and, as such, it models the vertex-degree distribution of many important graphs, including World Wide Web~\cite{kunegis2013preferential} and Wikipedia~\cite{capocci2006preferential,gandica2015wikipedia}.

\begin{table}[h!]
\centering
 \begin{tabular}{|c c c|} 
 \hline
 Graph & Nodes & Edges \\
 \hline
 Slashdot & 82,168 & 948,464 \\ 
 Berk-Stan Web & 685,230 & 7,600,595  \\
 LiveJournal & 4,847,571 & 68,993,773 \\
 PrefAttach 1 & 100,000 & 100,051,302 \\
 PrefAttach 2 & 1,000,000 & 999,999,375 \\
 Twitter & 41,652,230 & 1,468,364,884 \\
 \hline
 \end{tabular}
\caption{Datasets we use in the evaluation.}
\label{table:1}
\end{table}

\begin{figure*}[ht!]
\centering
\includegraphics[width=0.32\textwidth]{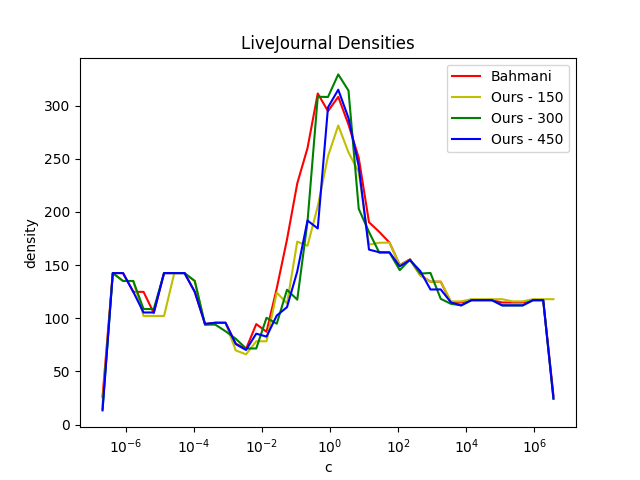}
\includegraphics[width=0.32\textwidth]{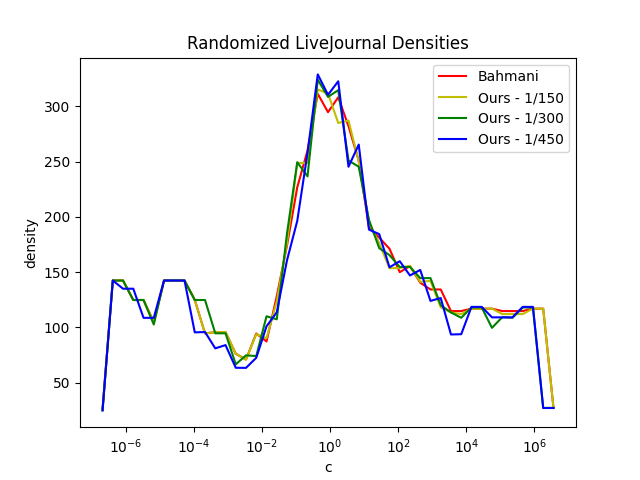}
\includegraphics[width=0.32\textwidth]{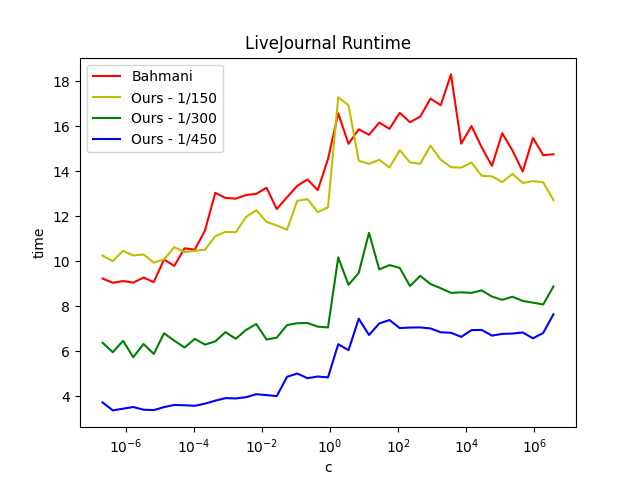}
\caption{Density and running-time as a function of $c$ for LiveJournal with $f = 1/150, 1/300, 1/450$.}
\label{fig:live}
\end{figure*}

\begin{figure*}[ht!]
\centering
\includegraphics[width=0.32\textwidth]{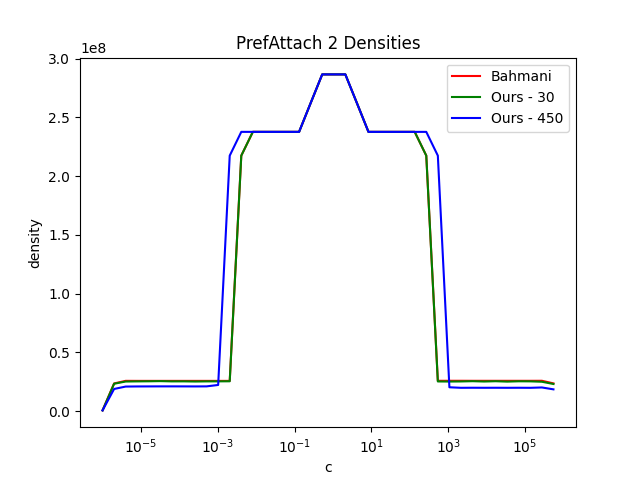}
\includegraphics[width=0.32\textwidth]{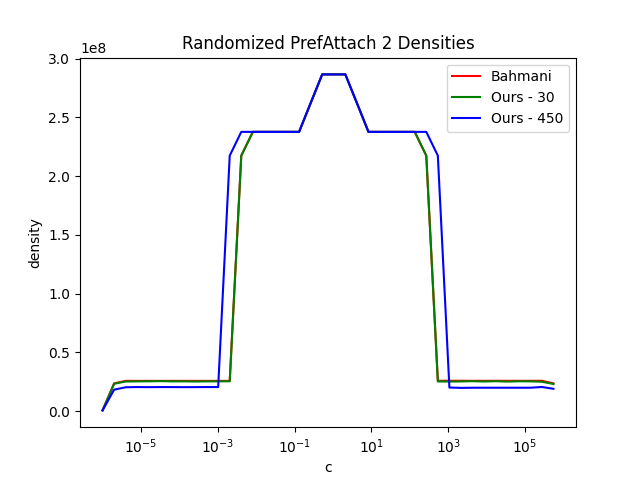}
\includegraphics[width=0.32\textwidth]{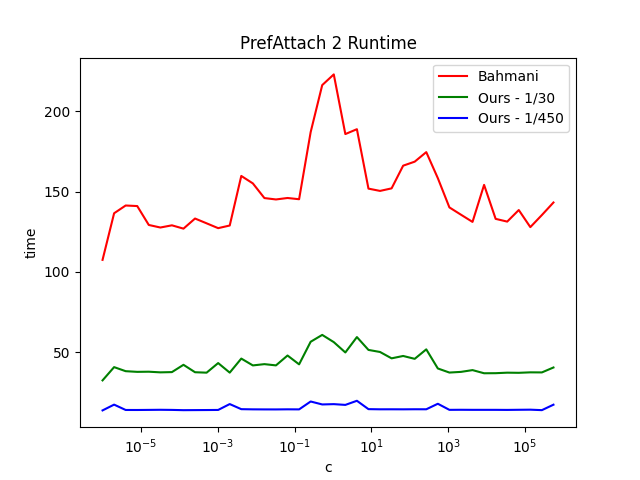}
\caption{Density and running-time as a function of $c$ for PrefAttach $2$ with $f = 1/30, 1/450$.}
\label{fig:graph2}
\end{figure*}

\begin{figure*}[ht!]
\centering
\includegraphics[width=0.32\textwidth]{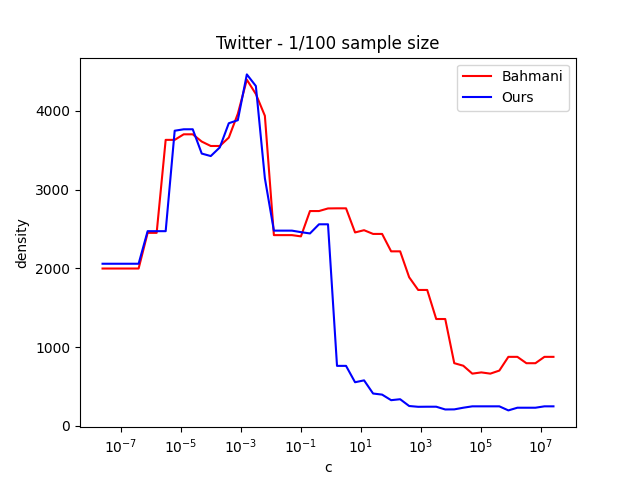}
\includegraphics[width=0.32\textwidth]{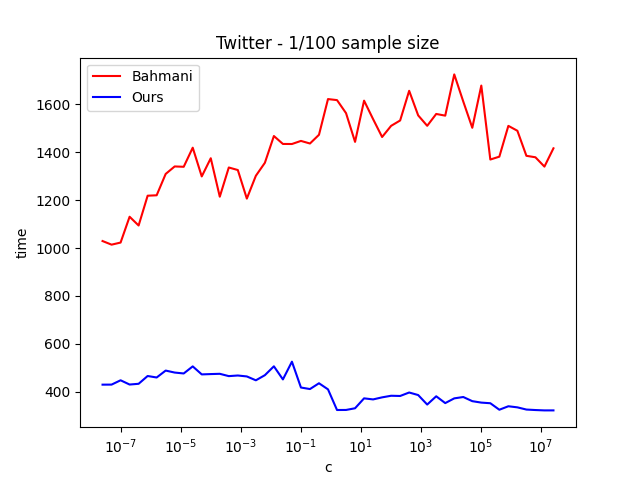}

\caption{Density and running-time as a function of $c$ for Twitter with $f = 1/100$.}
\label{fig:twitter}
\end{figure*}
\vspace{-15pt}
\subsection{Experimental Setup}
As a baseline, we use \cite{bahmani2012densest}.
Both algorithms, the one from \cite{bahmani2012densest} and ours, are implemented in C++ on a M1 machine running macOS Monterey 12.5.1 with 4 cores, 8 GB of RAM, 256 KB of L2 Cache, and 2.5 MB of L3 Cache (per core).
We run the algorithms for values of $c$ ranging from $1/n$ to $n$ with constant $\delta = 2$ and $\epsilon = 0.2$.
Then, we study the algorithms in terms of how close their approximation of the densest subgraph is and their runtime. Plots of density and runtime are created with respect to the constant $c$.

An important note about \cref{alg:random} is that it cannot calculate $\rho(S,T)$ for each iteration without seeing the whole graph.
However, when implementing the algorithm, we approximate that density through our sampled graph by multiplying the density in the sampled graph, $\rho_H(S,T)$, by $1/p$.

\newcommand{\f}{f}

\cref{alg:random} creates samples of size $n\xi = \frac{f \cdot n\log n}{\epsilon^2}$.
In the theoretical analysis, we set $\f = 60$.
In our evaluations, we also investigate the effect of $\f$ on the performance of our approach, expecting that $\f$ significantly smaller than $60$ would also result in accurate and fast execution of \cref{alg:random}.
To that end, in our implementation of \cref{alg:random}, we consider $\f \in \{1/10, 1/20, 1/30, 1/100, 1/150, 1/300, 1/450\}$.

Although our guarantees for \cref{alg:random} require randomized streams, to perform a faithful comparison to \cite{bahmani2012densest}, we also execute \cref{alg:random} \emph{\bf on arbitrarily ordered streams}.
In many cases, these arbitrarily ordered streams are edges sorted by their endpoints, i.e., these streams are far from being randomized.
\vspace{-5pt}
\subsection{Results}
Our algorithm finds subgraphs with densities very close to those output by our baseline Bahamani et al.~(VLDB 2021). For each graph, the densities with the best corresponding $f$ are within $3\%$ of each other.
Interestingly, this is the case even for relatively small values of $\f$, such as $\f = 1/450$, suggesting that relatively sparse samples can be leveraged in extracting dense subgraphs.
Moreover, our algorithm attains higher accuracy for large graphs than Bahamani et al.~(VLDB 2021).
For example, our algorithm's density is better by $1.5\%$ for Twitter (\cref{fig:twitter}) and by $5.5\%$ for LiveJournal (\cref{fig:live}).

In addition to the maximum density of our algorithm being similar to Bahamani et al.~(VLDB 2021), the shapes of the density plots with respect to $c$ are incredibly similar.
This further solidifies that, in terms of accuracy, \cref{alg:random} matches Bahamani et al.~(VLDB 2021) even on non-randomized streams.

In the cases of denser graphs, our algorithm is around $2$ times faster than the baseline. For PrefAttach $2$, it reaches around $4$ times faster (\cref{fig:graph2}) and for Twitter, around $2.5$ times faster (\cref{fig:twitter}). There is a balance between using smaller constants for our sample size, leading to the algorithm becoming faster, and the accuracy of the algorithm's density approximation. 

Note that the stream randomization does not affect our algorithm's performance, as illustrated in all the figures with randomized data.
In \cref{appendix:experiments}, we provide results of additional evaluations, including results for $\epsilon = 0.1$ which are similar to those described here.

\section{Conclusion and Future Work}
We studied the approximate directed densest subgraph problem.
We developed sampling strategies that enabled us to design efficient algorithms for finding dense subgraphs in semi-streaming and MPC.
Moreover, our approach appears to be highly practical -- we showed empirically that these sampling strategies yield faster algorithms even on non-randomized streams.

Our work was largely inspired by a discrepancy between state-of-the-art efficiency of finding undirected and directed dense subgraphs. Even though we made significant progress in bridging this gap, many interesting questions remain. We mention three that we find the most exciting.
(1) Is it possible to find a $\Theta(1)$-approximate directed densest subgraph in a non-randomized semi-streaming setting in $o(\log n)$ passes? As an intermediate question, studying the trade-off between the number of passes and the memory requirement would be interesting, e.g., interpolating between $1$ pass and $n^{3/2}$ memory, and $O(\log n)$ passes and $O(n \poly \log n)$ memory.
(2) For the same problem, does a $o(\log n)$ MPC round algorithm exist in the sublinear memory regime?
(3) Our MPC algorithm for the near-linear memory regime outputs a $2+\eps$ approximation. Is it possible to achieve a $1+\eps$ approximation with the same round complexity?


\section*{Acknowledgements}
We are grateful to Rishabh Bhaskaran for many insightful discussions on the empirical portion of our work.
S.~Mitrovi\' c was supported by the Google Research Scholar Program.

\bibliographystyle{alpha}
\bibliography{ref}

\newcommand{\etalchar}[1]{$^{#1}$}
\begin{thebibliography}{WLFV{\etalchar{+}}19}

\bibitem[BA99]{barabasi1999emergence}
Albert-L{\'a}szl{\'o} Barab{\'a}si and R{\'e}ka Albert.
\newblock Emergence of scaling in random networks.
\newblock {\em science}, 286(5439):509--512, 1999.

\bibitem[BC08]{buehrer2008scalable}
Gregory Buehrer and Kumar Chellapilla.
\newblock A scalable pattern mining approach to web graph compression with communities.
\newblock In {\em Proceedings of the 2008 international conference on web search and data mining}, pages 95--106, 2008.

\bibitem[BGM14]{bahmani2014efficient}
Bahman Bahmani, Ashish Goel, and Kamesh Munagala.
\newblock Efficient primal-dual graph algorithms for mapreduce.
\newblock In {\em International Workshop on Algorithms and Models for the Web-Graph}, pages 59--78. Springer, 2014.

\bibitem[BHNT15]{bhattacharya2015space}
Sayan Bhattacharya, Monika Henzinger, Danupon Nanongkai, and Charalampos Tsourakakis.
\newblock Space-and time-efficient algorithm for maintaining dense subgraphs on one-pass dynamic streams.
\newblock In {\em Proceedings of the forty-seventh annual ACM symposium on Theory of computing}, pages 173--182, 2015.

\bibitem[BKS17]{beame2017communication}
Paul Beame, Paraschos Koutris, and Dan Suciu.
\newblock Communication steps for parallel query processing.
\newblock {\em Journal of the {ACM} {(JACM)}}, 64(6):1--58, 2017.

\bibitem[BKV12]{bahmani2012densest}
Bahman Bahmani, Ravi Kumar, and Sergei Vassilvitskii.
\newblock Densest subgraph in streaming and mapreduce.
\newblock {\em Proceedings of the VLDB Endowment}, 5(5), 2012.

\bibitem[BM21]{bhattacharjee2021survey}
Panthadeep Bhattacharjee and Pinaki Mitra.
\newblock A survey of density based clustering algorithms.
\newblock {\em Frontiers of Computer Science}, 15:1--27, 2021.

\bibitem[Cha03]{charikar2003greedy}
Moses Charikar.
\newblock Greedy approximation algorithms for finding dense components in a graph.
\newblock In {\em Approximation Algorithms for Combinatorial Optimization: Third International Workshop, APPROX 2000 Saarbr{\"u}cken, Germany, September 5--8, 2000 Proceedings}, pages 84--95. Springer, 2003.

\bibitem[CS10]{chen2010dense}
Jie Chen and Yousef Saad.
\newblock Dense subgraph extraction with application to community detection.
\newblock {\em IEEE Transactions on knowledge and data engineering}, 24(7):1216--1230, 2010.

\bibitem[CSC{\etalchar{+}}06]{capocci2006preferential}
Andrea Capocci, Vito~DP Servedio, Francesca Colaiori, Luciana~S Buriol, Debora Donato, Stefano Leonardi, and Guido Caldarelli.
\newblock Preferential attachment in the growth of social networks: The internet encyclopedia wikipedia.
\newblock {\em Physical review E}, 74(3):036116, 2006.

\bibitem[EHW15]{esfandiari2015applications}
Hossein Esfandiari, MohammadTaghi Hajiaghayi, and David~P Woodruff.
\newblock Applications of uniform sampling: Densest subgraph and beyond.
\newblock {\em arXiv preprint arXiv:1506.04505}, 2015.

\bibitem[ELS15]{epasto2015efficient}
Alessandro Epasto, Silvio Lattanzi, and Mauro Sozio.
\newblock Efficient densest subgraph computation in evolving graphs.
\newblock In {\em Proceedings of the 24th international conference on world wide web}, pages 300--310, 2015.

\bibitem[FLM22]{fang2022densest}
Yixiang Fang, Wensheng Luo, and Chenhao Ma.
\newblock Densest subgraph discovery on large graphs: Applications, challenges, and techniques.
\newblock {\em Proceedings of the VLDB Endowment}, 15(12):3766--3769, 2022.

\bibitem[GCDA15]{gandica2015wikipedia}
Y{\'e}rali Gandica, Jo{\"a}o Carvalho, and F~Sampaio Dos~Aidos.
\newblock Wikipedia editing dynamics.
\newblock {\em Physical Review E}, 91(1):012824, 2015.

\bibitem[GLM19]{ghaffari2019improved}
Mohsen Ghaffari, Silvio Lattanzi, and Slobodan Mitrovi{\'c}.
\newblock Improved parallel algorithms for density-based network clustering.
\newblock In {\em International Conference on Machine Learning}, pages 2201--2210. PMLR, 2019.

\bibitem[Gol84]{goldberg1984finding}
Andrew~V Goldberg.
\newblock Finding a maximum density subgraph.
\newblock 1984.

\bibitem[GSZ11]{goodrich2011sorting}
Michael~T Goodrich, Nodari Sitchinava, and Qin Zhang.
\newblock Sorting, searching, and simulation in the mapreduce framework.
\newblock In {\em International Symposium on Algorithms and Computation}, pages 374--383. Springer, 2011.

\bibitem[HBG{\etalchar{+}}14]{harenberg2014community}
Steve Harenberg, Gonzalo Bello, La~Gjeltema, Stephen Ranshous, Jitendra Harlalka, Ramona Seay, Kanchana Padmanabhan, and Nagiza Samatova.
\newblock Community detection in large-scale networks: a survey and empirical evaluation.
\newblock {\em Wiley Interdisciplinary Reviews: Computational Statistics}, 6(6):426--439, 2014.

\bibitem[JM15]{jacob2015spatial}
Emmanuel Jacob and Peter M{\"o}rters.
\newblock Spatial preferential attachment networks: Power laws and clustering coefficients.
\newblock 2015.

\bibitem[KBM13]{kunegis2013preferential}
J{\'e}r{\^o}me Kunegis, Marcel Blattner, and Christine Moser.
\newblock Preferential attachment in online networks: Measurement and explanations.
\newblock In {\em Proceedings of the 5th annual ACM web science conference}, pages 205--214, 2013.

\bibitem[KKSZ11]{kriegel2011density}
Hans-Peter Kriegel, Peer Kr{\"o}ger, J{\"o}rg Sander, and Arthur Zimek.
\newblock Density-based clustering.
\newblock {\em Wiley interdisciplinary reviews: data mining and knowledge discovery}, 1(3):231--240, 2011.

\bibitem[KP05]{kriegel2005density}
Hans-Peter Kriegel and Martin Pfeifle.
\newblock Density-based clustering of uncertain data.
\newblock In {\em Proceedings of the eleventh ACM SIGKDD international conference on Knowledge discovery in data mining}, pages 672--677, 2005.

\bibitem[KSV10]{karloff2010model}
Howard Karloff, Siddharth Suri, and Sergei Vassilvitskii.
\newblock A model of computation for mapreduce.
\newblock In {\em Proceedings of the twenty-first annual ACM-SIAM symposium on Discrete Algorithms}, pages 938--948. SIAM, 2010.

\bibitem[LK14]{snapnets}
Jure Leskovec and Andrej Krevl.
\newblock {SNAP Datasets}: {Stanford} large network dataset collection.
\newblock \url{http://snap.stanford.edu/data}, June 2014.

\bibitem[LSIKK11]{leon2011web}
Yutaka~I Leon-Suematsu, Kentaro Inui, Sadao Kurohashi, and Yutaka Kidawara.
\newblock Web spam detection by exploring densely connected subgraphs.
\newblock In {\em 2011 IEEE/WIC/ACM International Conferences on Web Intelligence and Intelligent Agent Technology}, volume~1, pages 124--129. IEEE, 2011.

\bibitem[MTVV15]{mcgregor2015densest}
Andrew McGregor, David Tench, Sofya Vorotnikova, and Hoa~T Vu.
\newblock Densest subgraph in dynamic graph streams.
\newblock In {\em Mathematical Foundations of Computer Science 2015: 40th International Symposium, MFCS 2015, Milan, Italy, August 24-28, 2015, Proceedings, Part II}, pages 472--482. Springer, 2015.

\bibitem[RZZ{\etalchar{+}}21]{ren2021ensemfdet}
Yuxiang Ren, Hao Zhu, Jiawei Zhang, Peng Dai, and Liefeng Bo.
\newblock Ensemfdet: An ensemble approach to fraud detection based on bipartite graph.
\newblock In {\em 2021 IEEE 37th International Conference on Data Engineering (ICDE)}, pages 2039--2044. IEEE, 2021.

\bibitem[Sim55]{simon1955class}
Herbert~A Simon.
\newblock On a class of skew distribution functions.
\newblock {\em Biometrika}, 42(3/4):425--440, 1955.

\bibitem[SV19]{su2019distributed}
Hsin-Hao Su and Hoa~T Vu.
\newblock Distributed dense subgraph detection and low outdegree orientation.
\newblock {\em arXiv preprint arXiv:1907.12443}, 2019.

\bibitem[SW20]{sawlani2020near}
Saurabh Sawlani and Junxing Wang.
\newblock Near-optimal fully dynamic densest subgraph.
\newblock In {\em Proceedings of the 52nd Annual ACM SIGACT Symposium on Theory of Computing}, pages 181--193, 2020.

\bibitem[WLFV{\etalchar{+}}19]{wu2019density}
Jimmy Ming-Tai Wu, Chun~Wei Lin, Philippe Fournier-Viger, Youcef Djenouri, Chun-Hao Chen, and Zhongcui Li.
\newblock The density-based clustering method for privacy-preserving data mining.
\newblock 2019.

\bibitem[WWDR17]{wan2017fitting}
Phyllis Wan, Tiandong Wang, Richard~A Davis, and Sidney~I Resnick.
\newblock Fitting the linear preferential attachment model.
\newblock 2017.

\bibitem[ZLZL16]{zhang2016detecting}
Xianchao Zhang, Zhaoxing Li, Shaoping Zhu, and Wenxin Liang.
\newblock Detecting spam and promoting campaigns in twitter.
\newblock {\em ACM Transactions on the Web (TWEB)}, 10(1):1--28, 2016.

\bibitem[ZZY{\etalchar{+}}17]{zhang2017hidden}
Si~Zhang, Dawei Zhou, Mehmet~Yigit Yildirim, Scott Alcorn, Jingrui He, Hasan Davulcu, and Hanghang Tong.
\newblock Hidden: hierarchical dense subgraph detection with application to financial fraud detection.
\newblock In {\em Proceedings of the 2017 SIAM International Conference on Data Mining}, pages 570--578. SIAM, 2017.

\end{thebibliography}

\clearpage

\appendix
\section{Additional empirical evaluation}
\label{appendix:experiments}
In this section, we provide empirical evaluation in addition to those given in the main paper.
We refer a reader to \cref{sec:experiment} for details on the baselines, datasets, and computational setup.

\begin{figure*}[ht!]
\includegraphics[width=0.32\textwidth]{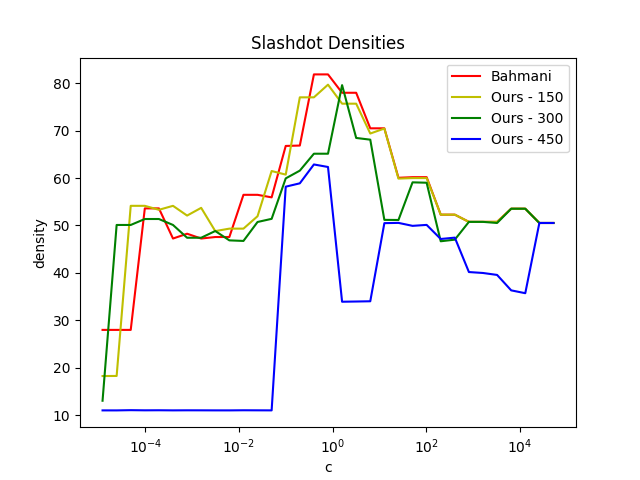}
\includegraphics[width=0.32\textwidth]{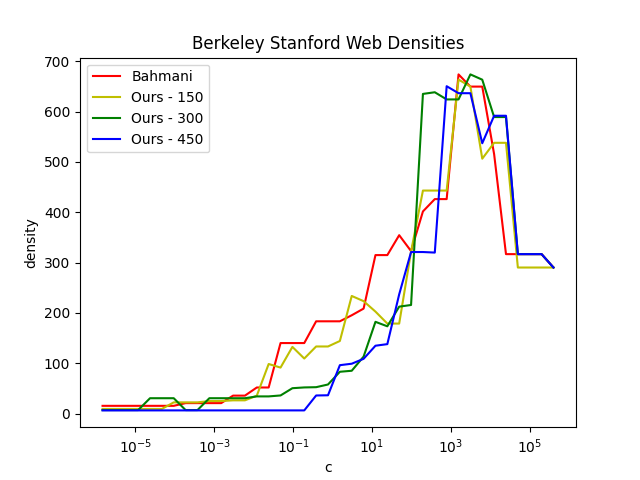}
\includegraphics[width=0.32\textwidth]{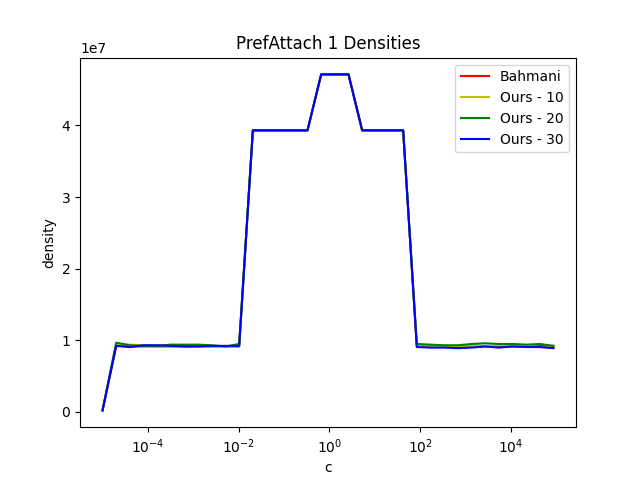}
\includegraphics[width=0.32\textwidth]{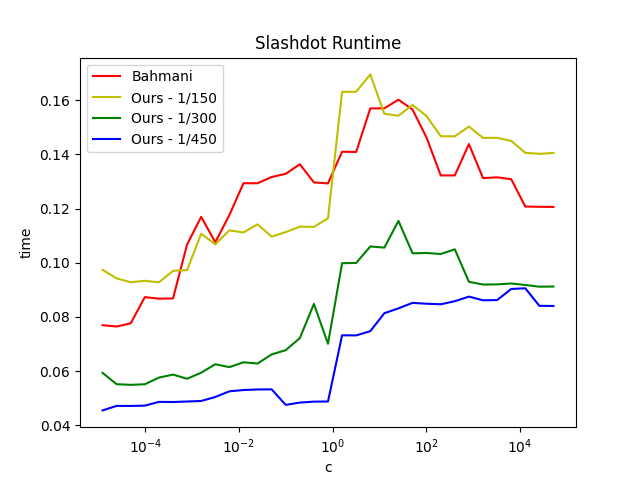}
\includegraphics[width=0.32\textwidth]{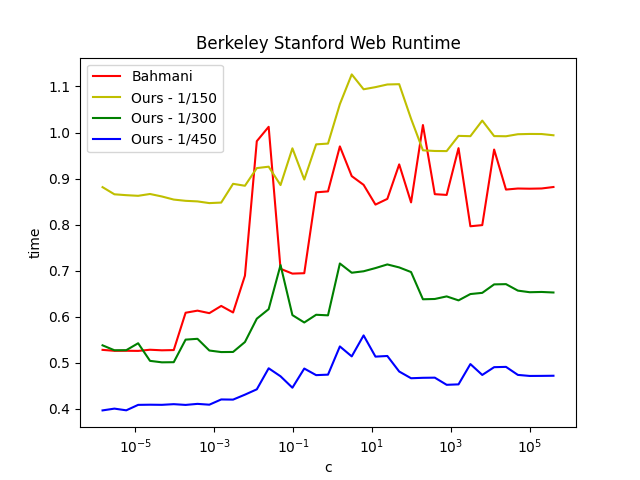}
\includegraphics[width=0.32\textwidth]{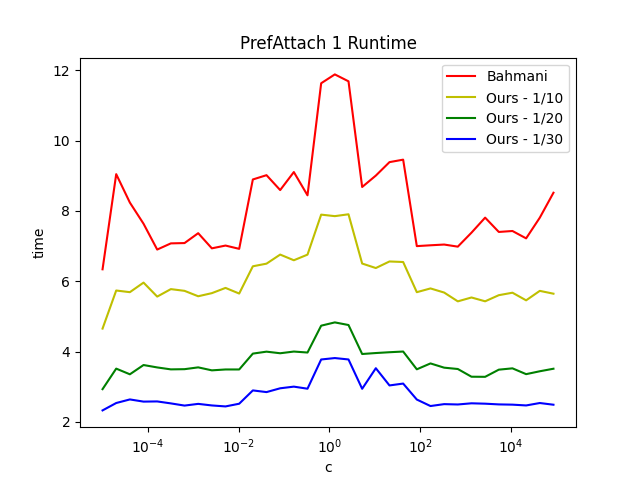}
\caption{Density and running-time as a function of $c$ for remaining graphs}
\label{fig:densityother}
\end{figure*}

\begin{figure*}[ht!]
\centering
\includegraphics[width=0.32\textwidth]{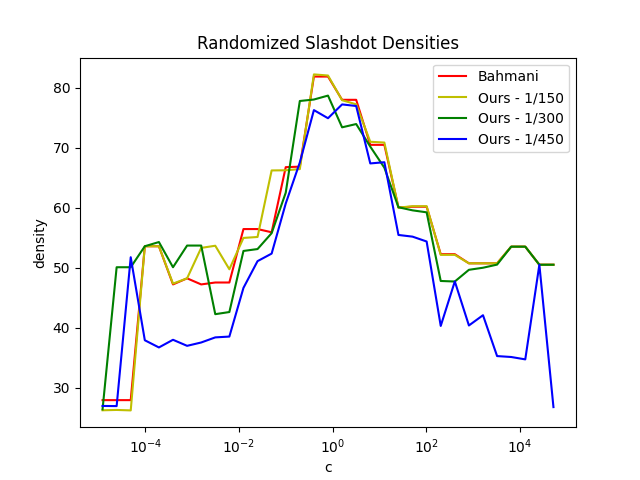}
\includegraphics[width=0.32\textwidth]{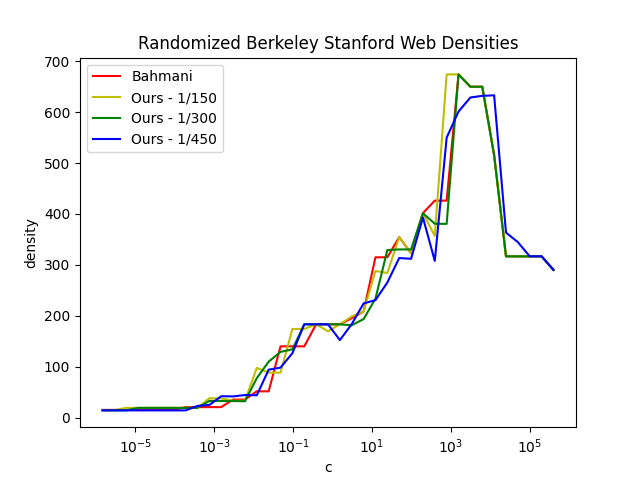}
\includegraphics[width=0.32\textwidth]{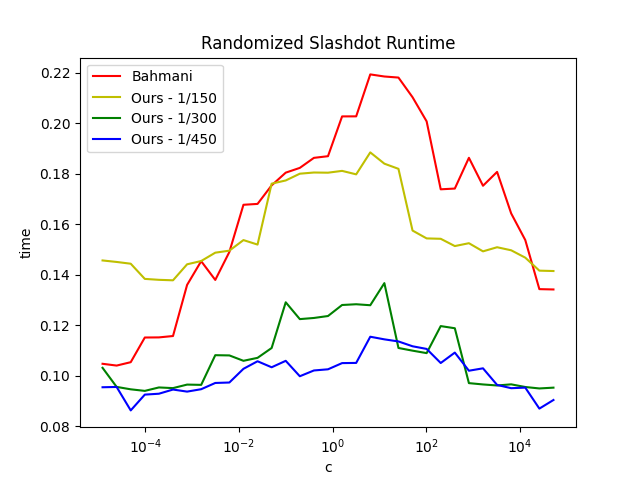}
\includegraphics[width=0.32\textwidth]{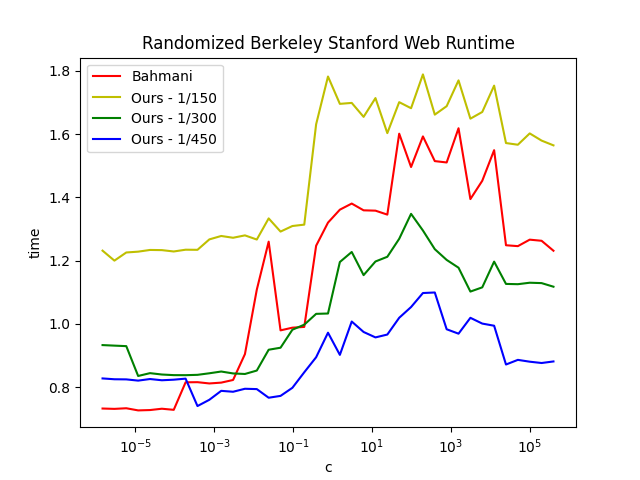}
\includegraphics[width=0.32\textwidth]{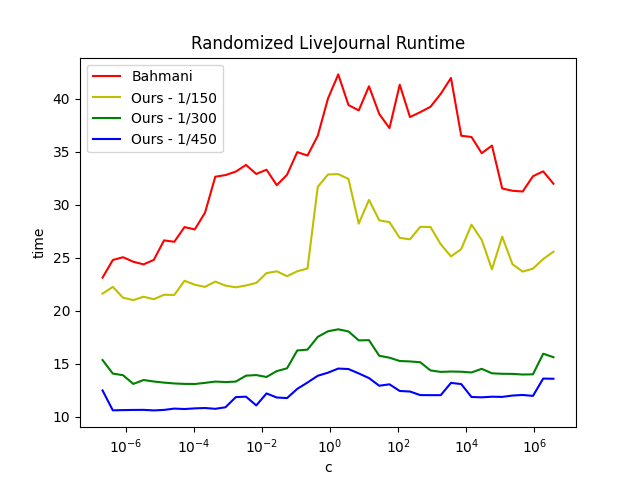}
\caption{Density and running-time as a function of $c$ for remaining graphs with randomization}
\label{fig:randomother}
\end{figure*}

\begin{figure*}[ht!]
\centering
\includegraphics[width=0.32\textwidth]{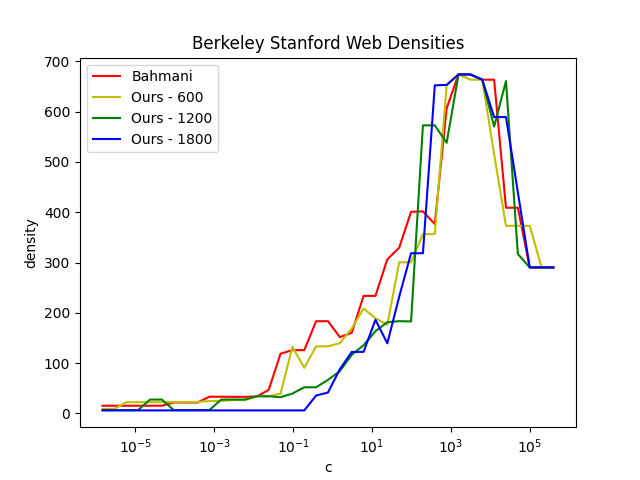}
\includegraphics[width=0.32\textwidth]{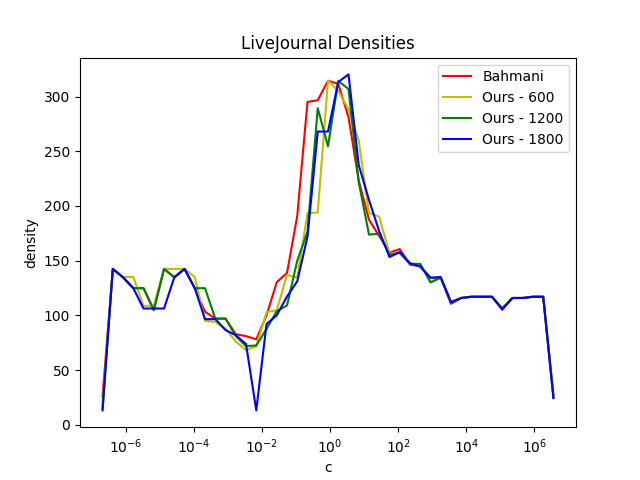}
\includegraphics[width=0.32\textwidth]{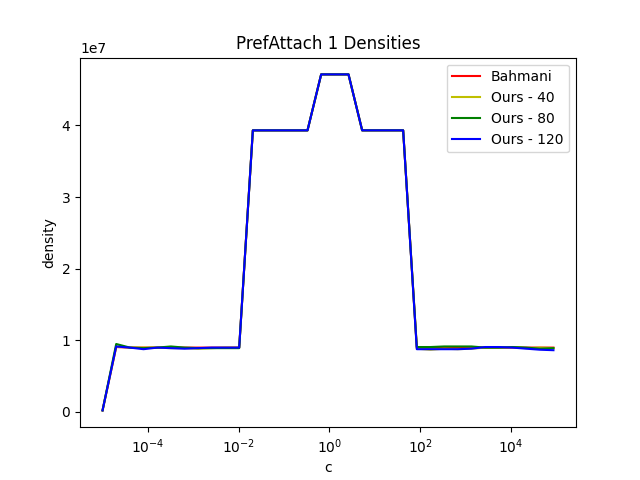}
\includegraphics[width=0.32\textwidth]{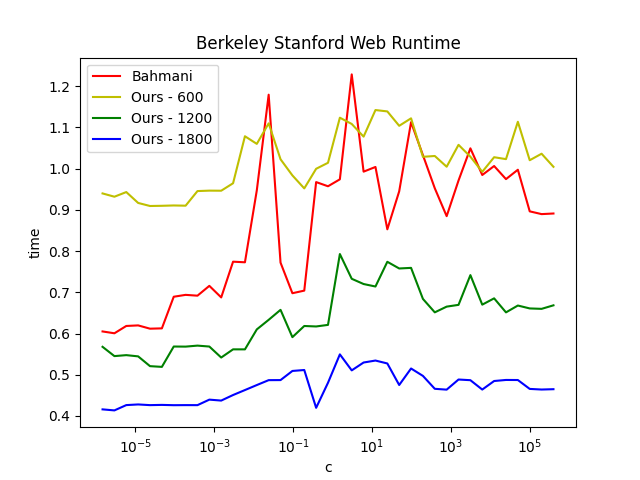}
\includegraphics[width=0.32\textwidth]{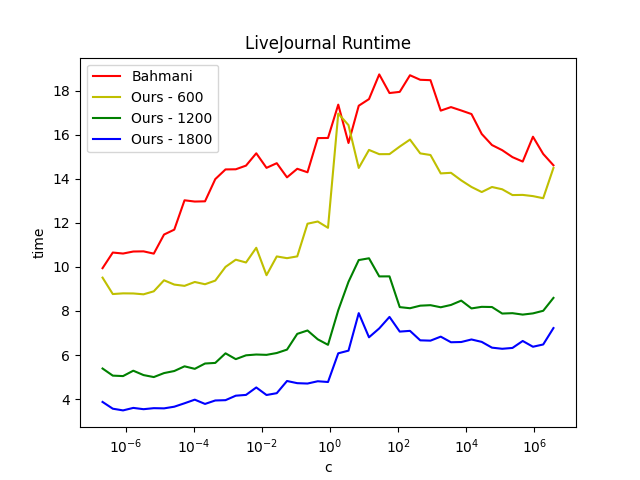}
\includegraphics[width=0.32\textwidth]{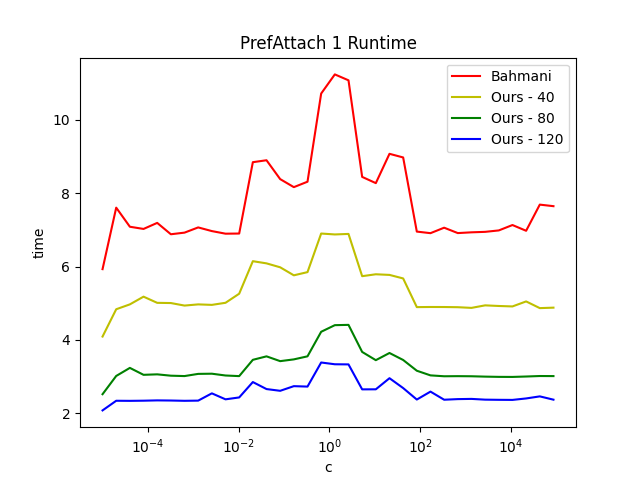}
\caption{Density and running-time as a function of $c$ for some graphs with $\epsilon = 0.1$}
\label{fig:epsilon0.1}
\end{figure*}

\end{document}